\documentclass[12pt,reqno]{amsart}
\usepackage[margin=1in]{geometry}
\usepackage{hyperref,amssymb,amsmath,enumerate,amsthm}
\usepackage[all]{xy}
\usepackage{mathtools}

\usepackage{graphicx}
\usepackage{epstopdf}
\DeclareGraphicsRule{.tif}{png}{.png}{`convert #1 `dirname #1`/`basename #1 .tif`.png}
\usepackage[table]{xcolor}

\usepackage{indentfirst}
\usepackage{latexsym}
\usepackage{caption}

\usepackage{enumitem}

\usepackage{tikz}
\usetikzlibrary{intersections,backgrounds}

\usepackage{minted}
\usepackage{qcircuit}

\newtheorem{thm}{Theorem}[section]
\newtheorem{lemma}[thm]{Lemma}

\newtheorem{defn}[thm]{Definition}
\newtheorem{remark}[thm]{Remark}

\usepackage{array}
\newcommand{\PreserveBackslash}[1]{\let\temp=\\#1\let\\=\temp}
\newcolumntype{C}[1]{>{\PreserveBackslash\centering}p{#1}}
\newcolumntype{R}[1]{>{\PreserveBackslash\raggedleft}p{#1}}
\newcolumntype{L}[1]{>{\PreserveBackslash\raggedright}p{#1}}

\newcommand{\A}{\mathcal{A}}
\newcommand{\D}{\Delta}

\newcommand{\C}{\mathcal{C}}
\newcommand{\F}{\mathbb{F}}

\renewcommand{\O}{\mathcal{O}}

\newcommand{\G}{\mathcal{G}}

\newcommand{\ket}[1]{\ensuremath{\left|{#1}\right\rangle}}
\usepackage{graphicx,subcaption}
\usepackage[font=small,labelfont=bf]{caption}

\usepackage{mathtools}
\usepackage{textcase}
\makeatletter
\let\uppercasenonmath\@firstofone 
\makeatother

\usepackage{amsaddr} 

\title{Isogeny Graphs in Superposition and \mbox{Quantum Onion Routing}}

\author[Eleni Agathocleous, Tobias Hartung, Karl Jansen, Lukas Mansour]{\mbox{Eleni Agathocleous$^1$, Tobias Hartung$^{2,3}$, Karl Jansen$^{1,4}$, Lukas Mansour$^{4}$}}\normalsize
\address{$^1$ Computation-Based Science and Technology Research Center, The Cyprus Institute\\
20 Kavafi Street, 2121 Nicosia, Cyprus \\
$^2$ Northeastern University -- London, Devon House, St Katharine Docks, London, E1W 1LP, United Kingdom \\
$^3$ Khoury College of Computer Sciences, Northeastern University, \#202, West Village Residence Complex H, 440 Huntington Ave, Boston, 02115, MA, USA\\
$^4$ Deutsches Elektronen-Synchrotron DESY, Platanenallee 6, 15738 Zeuthen, Germany}


\begin{document}

\begin{abstract}
\begin{center}\end{center}
Onion routing provides anonymity by layering encryption so that no relay can link sender to destination. A quantum analogue faces a core obstacle: layered quantum encryption generally requires symmetric encryption schemes whereas classically one would heavily rely on public key encryption. We propose a symmetric-encryption-based quantum onion routing (QOR) scheme by instantiating each layer with the abelian ideal class group action from the Theory of Complex Multiplication. Session keys are established locally via a Diffie-Hellman key exchange between neighbors in the chain of communication. Furthermore, we propose a novel ``non-local'' key exchange between the sender and receiver. The underlying problem remains hard even for quantum adversaries and underpins the security of current post-quantum schemes. We connect our construction to isogeny graphs and their association schemes, using the Bose--Mesner algebra to formalize commutativity and guide implementation. We give two implementation paths: (i) a universal quantum oracle evaluating the class group action with polynomially many quantum resources, and (ii) an intrinsically quantum approach via continuous-time quantum walks (CTQWs), outlined here and developed in a companion paper. A small Qiskit example illustrates the mechanics (by design, not the efficiency) of the QOR. 
\end{abstract}

\maketitle

\section{Introduction} Onion routing is a privacy-enhancing network protocol whose primary goal is anonymity. Beyond transport security, it uses layered encryption so no single relay can link sender to destination. Onion routing anonymises network traffic by wrapping it in multiple layers of encryption and sending it through a short chain of relays (guard $\rightarrow$ middle $\rightarrow$ exit). Each relay peels one layer and knows only the next hop, so no single node can link source and destination. If sufficient network traffic is present, this ensures that an outside observer can only link the sender of a message to the guard server and the receiver to the exit server, but tracking the communication through the network is difficult. Thus, onion routing not only keeps the message hidden but also hides who is communicating with whom. 

A quantum analogue of onion-routing encounters challenges intrinsic to quantum computing. First and foremost, the fact that every quantum operation is unitary implies that any operation performed on a state is reversible with relative ease. Thus, any cryptographic setup in which the message is encoded as a state and the encryption works as a function (determined by the key) on the message state is necessarily a symmetric encryption scheme. Public key encryption in a quantum computing setting therefore (up to further complications) has the key as a state and the message defining the operation performed, or has a fixed state and both key and message define the operation. The decryption process then relies on identifying which operation was performed by the sender and reconstructing the message from that information. Although theoretically possible, such an asymmetric encryption scheme cannot easily be layered, if the message is a genuine quantum state (as opposed to a bit-state encoded in qubits).

The main goal of this paper is to overcome these problems, and design a quantum onion routing through layered symmetric quantum encryption. We achieve this by basing each layer on the \emph{ideal class group action} from Number Theory, and specifically from the \emph{Theory of Complex Multiplication}. The associated group-action inversion (vectorization) problem remains hard even for quantum adversaries, and underpins post-quantum schemes such as CSIDH \cite{CSIDH}. ``Local session keys'' are generated via Diffie-Hellman key exchange for neighboring nodes in the chain of communication. Furthermore, we propose a method of communicating a ``global session key'' chosen by the sender to the receiver that allows for the final decryption step. 

Of course, many of the individual components of the proposed quantum onion routing scheme can be replaced with quantum encryption schemes that serve a similar role. E.g., local session keys can be established using quantum key distribution and the global session key encryption layer can be replaced with a quantum public key encryption layer. However, it is advantageous to base the entire scheme on a single underlying problem, such as for ease of implementation. This means that all encryption layers need to use compatible schemes. The isogeny-based post-quantum cryptography scheme chosen allows for exactly that while it is also possible to use the same scheme for classical encryption problems. Thus, the here proposed quantum onion routing scheme is also more easily compatible with larger systems requiring both classical and quantum cryptographic schemes. 

Incidentally, this also allows us to further develop the intersection of post-quantum cryptography with quantum computing. A secondary goal of this paper is therefore to implement an isogeny-based post-quantum cryptographic scheme within a quantum-computing-based protocol and execution routine.

The paper is structured as follows. In Section~\ref{sec:ClAction} we provide a brief account of the main facts and definitions from the Theory of Complex Multiplication necessary to understand the rational behind our quantum onion routing construction (QOR). Section~\ref{sec:RandomWalks} lifts the problem to the path-finding problem on isogeny graphs and discusses security assumptions. Section~\ref{sec:TorConstruction} connects isogeny graphs to association schemes and their Bose--Mesner algebra, yielding two implementation paths for QOR. The first uses a universal quantum oracle that evaluates the class group action with polynomially many quantum resources. We also outline a second, more intrinsically quantum route based on continuous-time quantum walks (CTQWs)~\cite{Childs,Mull}; this line is developed in a companion work and not pursued further here. Section~\ref{sec:Implementation} presents a fully worked Qiskit example—deliberately small and exponential in resources—whose purpose is to make the mechanics of QOR explicit. Finally, Section~\ref{sec:SecurityFinal} briefly discusses certain security considerations in classical vs.\ quantum settings.

\section{The Ideal Class Group Action}\label{sec:ClAction}
For any given fundamental discriminant $\Delta < 0$, we denote by $K_{\Delta} = \mathbb{Q}(\sqrt{\Delta})$ the corresponding imaginary quadratic number field, and by $Cl(\mathcal{O}_{\Delta})$ the ideal class group of the maximal order $\mathcal{O}_{\Delta}$. We write $h(\D) = |Cl(\mathcal{O}_{\Delta})|$ for the class number. Associated to $Cl(\mathcal{O}_{\Delta})$ is the maximal abelian unramified extension $H_{\Delta}$ of $K_{\Delta}$, known as the Hilbert Class Field of $K_{\Delta}$. 

Fundamental results from Class Field Theory and the Theory of Complex Multiplication yield the following important result: 
\begin{thm}\label{thm:CM} \emph{\cite[\S 10.3, Theorem 5]{Lang}:} For any imaginary quadratic number field $K_{\Delta} = \mathbb{Q}(\sqrt{\Delta})$ with maximal order $\mathcal{O}_{\D}$ and ideal class group $Cl(\mathcal{O}_{\D})$, we let
$\{\mathfrak{a}_{s}\}_{\{1 \leq s \leq h(\D)\}}$ be a set of representatives for the $h(\D)$-many distinct ideal classes of $Cl(\mathcal{O}_{\D})$. Then the numbers $j(\mathfrak{a}_{i})$ are all conjugate over $K_{\D}$ and over $\mathbb{Q}$. For any representative $\mathfrak{a}\in \{\mathfrak{a}_{s}\}_{\{1 \leq s \leq h(\D)\}}$ the following isomorphism holds true
$$Cl(\mathcal{O}_{\D}) \cong Gal(K(j(\mathfrak{a}))/K), \ \text{via the explicit map} \ \
\mathfrak{b} \mapsto \sigma_{\mathfrak{b}}.$$
The irreducible polynomial having the $h(\D)$-many algebraic integers $j(\mathfrak{a}_{s})$ as roots is called the \emph{Hilbert Class Polynomial} associated to the order $\mathcal{O}_{\D}$ and is given by 
\[
H_{\D}(x) = \prod_{1 \leq s \leq h(\D)} \big(x - j(\mathfrak{a}_{s})\big) \in \mathbb{Z}[x].
\]
The action of $Cl(\mathcal{O}_{\D}) \cong Gal(K(j(\mathfrak{a}))/K)$ on the roots of $H_{\D}$ is given by $$\sigma_{\mathfrak{b}}j(\mathfrak{a}) = j(\mathfrak{b}^{-1}\mathfrak{a}),$$ and this action is free and transitive.
\end{thm}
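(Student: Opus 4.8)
The plan is to reduce everything to two external inputs: the analytic and algebraic theory of elliptic curves with complex multiplication, and the Artin reciprocity law from Class Field Theory. First I would realize the ideal classes geometrically. To each proper fractional $\mathcal{O}_\D$-ideal $\mathfrak{a}$ one associates the elliptic curve $E_\mathfrak{a} = \mathbb{C}/\mathfrak{a}$, whose $j$-invariant $j(\mathfrak{a}) := j(E_\mathfrak{a})$ depends only on the homothety class of the lattice $\mathfrak{a}$, hence only on the class $[\mathfrak{a}] \in Cl(\mathcal{O}_\D)$. This already pins down the cardinality: there are exactly $h(\D)$ distinct values $j(\mathfrak{a}_s)$, one per ideal class. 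A separate, standard step is to check that each $j(\mathfrak{a})$ is an algebraic integer; I would obtain this from the modular polynomial $\Phi_N(X,Y)$ together with the existence of a cyclic $N$-isogeny $E_\mathfrak{a} \to E_\mathfrak{a}$ arising from an element of $\mathcal{O}_\D$ of norm $N$, which forces $j(\mathfrak{a})$ to be integral over $\mathbb{Z}$.

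The core of the argument is the Galois action, and here I would invoke the First Main Theorem of Complex Multiplication. The decisive statement is that $K(j(\mathfrak{a}))$ is precisely the Hilbert Class Field $H_\D$, and that for an unramified prime ideal $\mathfrak{p}$ of $K$ the Artin symbol acts by $\sigma_\mathfrak{p}(j(\mathfrak{a})) = j(\mathfrak{p}^{-1}\mathfrak{a})$. I expect this to be the main obstacle: proving it cleanly requires reducing the CM elliptic curves modulo $\mathfrak{p}$ and identifying the Frobenius endomorphism in characteristic $p$ with the isogeny induced by $\mathfrak{p}$, via the Kronecker congruence $\Phi_p(X,Y) \equiv (X^p - Y)(X - Y^p) \pmod p$. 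Granting this on prime ideals, multiplicativity of the Artin map $\mathfrak{b} \mapsto \sigma_\mathfrak{b} = \left(\tfrac{H_\D/K}{\mathfrak{b}}\right)$ together with unique factorization into primes extends the formula to $\sigma_\mathfrak{b}(j(\mathfrak{a})) = j(\mathfrak{b}^{-1}\mathfrak{a})$ for every ideal $\mathfrak{b}$, and the reciprocity law upgrades this to the claimed isomorphism $Cl(\mathcal{O}_\D) \cong Gal(H_\D/K) = Gal(K(j(\mathfrak{a}))/K)$ with kernel exactly the principal ideals.

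From the action formula the remaining assertions follow formally. Transitivity holds because, given classes $[\mathfrak{a}]$ and $[\mathfrak{a}']$, the class $[\mathfrak{b}] = [\mathfrak{a}][\mathfrak{a}']^{-1}$ satisfies $\sigma_\mathfrak{b}(j(\mathfrak{a})) = j(\mathfrak{a}')$; freeness holds because $\sigma_\mathfrak{b}(j(\mathfrak{a})) = j(\mathfrak{a})$ forces $\mathfrak{b}^{-1}\mathfrak{a} \sim \mathfrak{a}$, i.e. $[\mathfrak{b}] = 1$. Since $Gal(H_\D/K)$ then permutes the $j(\mathfrak{a}_s)$ in a single orbit, these values are pairwise conjugate over $K$ and $H_\D(X) = \prod_s (X - j(\mathfrak{a}_s))$ is their common minimal polynomial over $K$. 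To finish I would promote this to conjugacy over $\mathbb{Q}$ and integrality of the coefficients: complex conjugation sends the lattice $\mathfrak{a}$ to $\overline{\mathfrak{a}} \sim \mathfrak{a}^{-1}$, so the set $\{j(\mathfrak{a}_s)\}$ is stable under $Gal(\overline{\mathbb{Q}}/\mathbb{Q})$; hence $H_\D(X)$ is fixed by $Gal(\overline{\mathbb{Q}}/\mathbb{Q})$ and lies in $\mathbb{Q}[X]$, and since its roots are algebraic integers, in fact $H_\D(X) \in \mathbb{Z}[X]$. This simultaneously shows the $j(\mathfrak{a}_s)$ are all conjugate over $\mathbb{Q}$, completing the statement.
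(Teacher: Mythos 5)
The paper never proves this theorem: it is imported verbatim from Lang \cite[\S 10.3, Theorem 5]{Lang} as a classical result, so there is no internal proof to compare against, and your sketch is in essence a correct outline of the standard argument in that cited source (also in Cox and in Silverman's \emph{Advanced Topics}): realize ideal classes as $\mathbb{C}/\mathfrak{a}$, get integrality of $j(\mathfrak{a})$ from the modular equation $\Phi_N$, prove the First Main Theorem of CM by reducing modulo $\mathfrak{p}$ and identifying Frobenius with the isogeny induced by $\mathfrak{p}$ (Kronecker congruence), extend by multiplicativity and Artin reciprocity to $\sigma_{\mathfrak{b}}\,j(\mathfrak{a}) = j(\mathfrak{b}^{-1}\mathfrak{a})$, and deduce freeness, transitivity, and the descent to $\mathbb{Q}$ from complex conjugation acting as class inversion. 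The only loose point is the final step, where stability of $\{j(\mathfrak{a}_s)\}$ under complex conjugation together with coefficients in $K$ gives $H_{\Delta}\in\mathbb{Q}[x]$ (since $K\cap\mathbb{R}=\mathbb{Q}$), and conjugacy over $\mathbb{Q}$ then needs the explicit remark that irreducibility over $K$ plus rationality of coefficients forces irreducibility over $\mathbb{Q}$ --- an inference you use implicitly but do not state.
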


\begin{remark}\label{rmk:ActionCom}
    Let us denote the class group action by `$*$'. As the ideal class group is an abelian group, given Theorem~\ref{thm:CM} above, we easily conclude that 
    $$\mathfrak{ab}*j(\mathfrak{c}) = j((\mathfrak{ab})^{-1}\mathfrak{c}) = j((\mathfrak{ba})^{-1}\mathfrak{c}) = \mathfrak{ba}*j(\mathfrak{c}).$$
\end{remark}
The action of the fractional ideals on the roots of $H_{\D}$ defined in the Theorem~\ref{thm:CM} above, is what is referred to in cryptography as \textit{the class group action}. Together with Deuring's Theorems (e.g., \cite{Deuring} and \cite[Theorem 12 \S13.4 and Theorem 14 \S 13.5]{Lang}), which enable the transition from characteristic zero to characteristic $p$ and vice-versa, are the key ingredients that make isogeny-based cryptography built on the ideal class group action, possible. In the ordinary case in particular, where $H_{\D}$ splits completely in $h(\D)$-many distinct roots over $\mathbb{F}_{q}$, an elliptic curve in characteristic zero with $j$-invariant a root of $H_{\D}$ and an elliptic curve in characteristic $p$ with $j$-invariant a root of $H_{\D}\bmod q$ have isomorphic endomorphism rings. Furthermore, the class group action in characteristic zero respects this isomorphism, and we observe the same class group action on the roots of $H_{\D}$ modulo $q$. 

Even in the supersingular case, if instead of the full ring of endomorphisms, which is non-commutative, we consider the subring of $\F_p$-rational endomorphisms, this subring is again an order in an imaginary quadratic field and the ideal class group action applies as above. The advantage is that now the supersingular elliptic curves $E/\F_p$ have known trace $t=0$ and known number of points $|E(\F_p)| = p+1$. Hence, with the right choice of a prime $p$ we can have $\F_p$-rational isogenies of desired degree $\ell$, as long as $\ell | p+1$. That is why the existing cryptographic schemes based on the ideal class group action, such as CSIDH \cite{CSIDH} and SCURF \cite{CSURF}, employ supersingular curves.

The one-to-one correspondence between the classes $[\mathfrak{a}]$ of the ideal class group $Cl(\O_{\D})$ and the homothety classes of lattices $[\Lambda_{\mathfrak{a}}] \cong [\langle 1 , \tau_{\mathfrak{a}} \rangle]$ with $\O_{\D}$ as their full ring of complex multiplication \cite[Corollary 10.20]{Cox} was employed by the authors of \cite[Theorem 3.2.]{JaoExpander} in order to show that the isogeny graph $G_{\D,S,q}(J_{\D},E_{S})$ is isomorphic to the Cayley graph $\G(Cl(\mathcal{O}_{\D}),S)$, for some set $S$ that does not contain the identity element and is closed under inversion. The set of vertices $J_{\D} = \{j_1 , ... , j_{h(\D)}\} \subseteq \mathbb{F}_{q}$ contains the $h(\D)$-many distinct roots of $H_{\D}$ modulo $q$. Every edge $(j_1,j_2) \in E_{S}$ corresponds to an isogeny between elliptic curves $E_{1}, E_{2}$ with corresponding $j$-invariants $j_1, j_2 \in J_{\D}$. The degree of any such isogeny equals some positive integer $b$ such that $b=\text{Norm}_{K_{\D}/\mathbb{Q}}(\mathfrak{b})$ for some ideal $\mathfrak{b} \in S$. When we include in $S$ all ideals of prime norm less than some fixed bound $M \geq (\log |\D|)^{B}$, for some absolute constant $B > 2$ then, assuming GRH, the isogeny graph $G_{\D,S,q}$ becomes an expander graph \cite[Theorem 3.2.]{JaoExpander}.

\section{Walks on Isogeny Graphs}\label{sec:RandomWalks}
\subsection{The Cayley Graph of the Ideal Class Group}\label{subs:Cay}
As the ideal class group of a number field is a finite abelian group and the isogeny graphs are isomorphic to Cayley graphs of the underlying ideal class group, we focus our attention on Cayley graphs of finite abelian groups. For the basic facts and definitions that we quote in the following, we follow mainly \cite{Big} and \cite{GoRo}.  

Let us denote by $\Gamma$ any finite abelian group written multiplicatively, and with identity element $1_{\Gamma}$. Let $1_\Gamma \notin S \subseteq \Gamma$ be any subset that is closed under taking inverses. Then the \emph{Cayley Graph} $\G(\Gamma,S)$ is the graph with vertex set $V(\G) = \Gamma$ and edge set $$E(\G) = E(\G(\Gamma,S)) = \{gh \ | \ hg^{-1} \in S\}.$$

A \emph{subgraph} $\G'$ of $\G$ is a graph with $V(\G') \subseteq V(\G)$ and $E(\G') \subseteq E(\G)$. In the case where $V(\G') = V(\G)$, $\G'$ is called a \emph{spanning subgraph} of $\G$.

The graph $\G(\Gamma,S)$ is an undirected graph with no loops. It is, as expected, vertex-transitive. Recall from above that $G_{\D,S,q}(J_{\D},E_{\D}) \cong \G(Cl(\mathcal{O}_{\D}),S)$ and the ideal class group action on $J_{\D} \equiv V(\G(Cl(\mathcal{O}_{\D}),S))$ is free and transitive. Vertex-transitive graphs are regular; i.e. each vertex has the same number of edges $k \leq |S|$. This integer $k$ is called the \emph{degree}. When we choose $S$ to be a generating set for $\Gamma$, then the graph is connected.    

The \emph{Adjacency} matrix $A(\G)$ of the connected undirected regular graph $\G(\Gamma,S)$ is the $n\times n$ real symmetric matrix, with entries $a_{i,j} = 1$ if $(v_i,v_j) \in E(\G)$ and $a_{i,j} = 0$ if $(v_i,v_j) \notin E(\G)$. As a real symmetric matrix, $A(\G)$ has real eigenvalues $\lambda_n \leq \cdots \leq \lambda_1 = k$, where as above, $k$ is the degree. Each eigenvalue of $A$ can be given in the form
$$\lambda_{\chi} = \sum_{s \in S} \chi(s), \ \chi \in \hat{\Gamma},$$
where as customary $\hat{\Gamma}$ denotes the character group of the finite abelian group $\Gamma$, which is isomorphic (non-canonically) to $\Gamma$ and hence it is of the same cardinality $n = |\Gamma| = |\hat{\Gamma}|$. Thus, the spectrum of $\G(\Gamma,S)$ consists of character sums ranging over the generating set $S$.

\subsection{Random Walks} As we mentioned in Section~\ref{sec:ClAction}, in the case of isogeny graphs $\G_{\D,S,q}$ and for an appropriate set $S$, the result of \cite{JaoExpander} guarantees that these graphs become expander graphs. In terms of the eigenvalues, what this result yields is a bound on the spectral gap. More specifically, the absolute value of every eigenvalue $\lambda \neq \lambda_1$ should be of the order $|\lambda| = O((\lambda_1\log \lambda_1)^{1/2+1/B})$ for some absolute constant $B > 2$, and $M \geq (\log |\D|)^{B}$ is such that $S = S_{M} = S_{M}^{-1}$ contains the ideals of prime norm up to $M$. With such a choice of $S$, there exists a constant $C>0$ such that any walk on
$\mathcal{G}_{\Delta,S,q}$ of length
\begin{equation}\label{eqn:WalkLength}
  t \;\ge\; C\,\frac{\log |Cl(O_{\D})|}{\log\log q}
\end{equation}
is well mixed. 

Assume for simplicity that $Cl(O_{\D}) \cong \mathbb{Z}/r\mathbb{Z}$ is cyclic of odd prime order $r$. In implementation terms, for a walk of length $t$ on the associated isogeny graph, choose some generator $g \in Cl(O)$ and also choose uniformly at random $k = C\lceil \log r \rceil$ exponents
$c_1,\ldots,c_k \in (\mathbb{Z}/r\mathbb{Z})^{\times}$, sample a random word
$w=(c_{s_1},\ldots,c_{s_m})$ with $s_\ell \in \{1,\ldots,k\}$, and form the
class-group element
\begin{equation}\label{eqn:RandomElement}
  g^{\,e}, \qquad e \equiv \sum_{\ell=1}^{m} c_{s_\ell} \pmod r.
\end{equation}
This random element $g^e$ is then used for the class group action on the given $j$-invariant, say $j_0$, to yield the end-point $g^e * j_0 = j_1$ of the walk.

We rely on the \emph{vectorization} (group-action inversion) problem for the ideal class group action, which is believed to remain hard even for quantum adversaries. Concretely: given a pair of $j$-invariants $(j_0,j_1)$, it is not possible in quantum polynomial time to recover the secret $g^e\in \mathrm{Cl}(\mathcal{O}_\Delta)$ such that $g^e * j_0 = j_1$. Equivalently, one cannot efficiently reconstruct a path (walk) on the underlying isogeny graph $\mathcal{G}_{\Delta,S,q}$ connecting $j_0$ to $j_1$.

In Section~\ref{subs:AS-BM} we describe these isogeny graphs via association schemes and their Bose–Mesner algebra, with emphasis on the cyclic, prime-order case; in this setting the graph is a disjoint union of undirected $r$-cycles. This perspective both places the path-finding problem in an algebraic–spectral context and sets the stage for a parallel line of work in which we exploit this structure to implement our quantum onion routing via CTQWs.

\section{Quantum Onion Routing based on the ideal class group action}\label{sec:TorConstruction}
The main goal of this paper is to construct a quantum onion routing scheme based on the ideal class group action. Without loss of generality, we present our construction in the ordinary case and, as we already discussed in Section~\ref{sec:ClAction}, the same ideas readily extend to the supersingular setting, in the same way as CSIDH and CSURF extended the original CRS in the Diffie–Hellman context.

Given $\D$, $p$ and $a >0$, the ideal class group $Cl(\O_{\D})$ can be computed in quantum-polynomial time~\cite[Theorem 3]{Hall}. We choose $\D$ so as $Cl(\O_{\D})$ has a large-enough, for security reasons, cyclic part. Such discriminants are not hard to find since, according to the Cohen-Lenstra Heuristics \cite{C-L}, the odd part of the class group of imaginary quadratic number fields is cyclic with probability $\sim 97.7575\ldots \%$. 

In order to simplify our scheme and notation, from now on we will assume further that $Cl(\mathcal{O}) \cong \mathbb{Z}/r\mathbb{Z}$, for $r$ a large prime of exponential size, and that $a=1$; i.e. $q=p$. The set 
$$J = \{j_i : 0 \leq i \leq r-1\} \subset F_p$$ contains the $r = |Cl(\mathcal{O})|$-many $j$-invariants. The corresponding isogeny graph is isomorphic to the Cayley graph $\G_r \coloneqq \G(Cl(O_{\D}),S) \cong \G(\mathbb{Z}/r\mathbb{Z},S)$, for a spanning set $S \subset Cl(O_{\D})$; i.e. we include in $S$ all ideals of prime norm less than some fixed bound $M \geq (\log |\D|)^{B}$, for some absolute constant $B > 2$, as we discussed in the end of Section~\ref{sec:ClAction}. 

In this prime-order case, every non-trivial subgraph is a spanning subgraph of $\G_r$, and a random element $g^e$, as in Equation~\ref{eqn:RandomElement}, is a walk on such a graph comprised of $O(\log r)$-many $r$-cycles. As the ideal class group action is commutative \ref{rmk:ActionCom} and injective (see, e.g. \cite[Lemma 5.1]{ChJaSo}), layered encryptions are possible also in the quantum setting, as reversibility is guaranteed.

A naive sketch of the protocol, with only three users present for simplicity, is outlined below in Procedure~\ref{alg:Naive}. These three users will be called Alice (sender), Bob (intermediary), and Carol (receiver).
\begin{figure}[h!]
\captionsetup{name=Procedure}
    \centering
        \caption{High-Level Overview of the First Part of the Protocol}\label{alg:Naive}
        \
    \begin{enumerate}
    \item Fix and make public a $j$-invariant $j_0 \in J_{\D}$.
    \item When Carol is notified by Bob that somebody wants to send her a meassage, she computes her secret element $\mathfrak{c} \in Cl(O_{\D})$ as described in Equation~\ref{eqn:RandomElement} and sends to Bob $$\mathfrak{c}*j_0 = j_C.$$
    \item Bob encrypts with his secret $\mathfrak{b} \in Cl(O_{\D})$ and sends to Alice
    $$\mathfrak{b}*j_C = j_{BC} = j_{CB}.$$
    \item Alice does the same with her secret $\mathfrak{a} \in Cl(O_{\D})$ and sends back to Bob  $$\mathfrak{a}*j_{BC} = j_{ABC}.$$
    \item Bob removes his encryption by applying  
    $$\mathfrak{b}^{-1}*j_{ABC} = j_{AC}.$$
    \item When Carol receives $j_{AC}$ from Bob, she undoes her encryption as well, and obtains Alice's key $j_A$, which she will now use to read Alice's message.
\end{enumerate}
\end{figure}

In terms of implementation, there are two fundamental issues. Firstly, we need to specify how to apply the class group action $*$ and how to deal with the exponentially large isogeny graph. Secondly, we need to specify where and how to hide Alice's message. The subsequent sections are devoted to answering these questions.

\subsection{The Message Encryption}\label{sec:Message} Let $|m\rangle\in\mathcal{M}$ be the quantum message in the message space $\mathcal{M}$ of dimension $2^N$. To each $j$-invariant $j \in J_{\Delta} \subseteq \mathbb{F}_p$, which can be represented by $\log p$-many qubits, we can associate a quantum circuit $C(j)$ on $N$-qubits as follows:    
\begin{equation}\label{Circuit}C(j)=\bigotimes_{k=0}^{N-1}R_X(\vartheta_{k}(j)).\end{equation}
For the angles $\vartheta_{k}(j)$, we take a fixed $N$-dimensional (scrambled) Sobol' sequence $S$, choose Sobol' point $S_{2^{K}+j}$, and $\vartheta_k(j)$ is $2\pi$ times the $k$-th coordinate of the Sobol' point. Since the Sobol' points are more uniform, the potential rotation angles are uniformly distributed in the parameter space. To make this as uniform as possible, we want the band $[2^K,2^K+p]$ to be as close to $[2^K,2^{K'}]$ as possible; i.e., $p\approx 2^K$ gives the band $[2^K,2^{K+1}]$, which is the case for our cryptographic construction since $p$ needs to be of exponential size.

\begin{remark}
    It should be noted that the explicit choice of $R_X$ gates with Sobol' sequence generated angles is not critical for the implementation. Any suitable and efficiently implementable circuit construction from the $j$-invariants would be viable. For example, one could choose $U$-gates instead and a Sobol' sequence of size $3N$. Adding entangling gates may also be considered.
\end{remark}

Going back to step~(4) of the naive Procedure~\ref{alg:Naive} above, Alice could send to Bob 
$$(j_{ABC},\C(j_A)|m\rangle),$$ where $|m\rangle \in \mathcal{M}$ is her secret message. Bob can undo his encryption and send 
$$(j_{AC},\C(j_A)|m\rangle)$$ to Carol, and once Carol obtains $j_A$, then she can read Alice's message by using the inverse circuit, i.e. $$C(j_A)^{-1}C(j_A)|m\rangle.$$ 
We notice however that the hidden message $\C(j_A)|m\rangle$ remains the same throughout the channel. Even though quantum data cannot be cloned nor observed without tampering, one can still make imperfect copies of this message and, in case they obtain a big overlap, then the communications channel will be reveled. To address this issue, we propose the following variant that makes use of an additional Diffie–Hellman setup.

Every user in the network has an established shared secret key with their neighbor (either via quantum or classical computation). The secret shared key is another $j$-invariant derived through Diffie–Hellman using the class group action. In this scenario, the exchange can proceed as shown in Procedure~\ref{alg:Message}:
\begin{figure}[h!]
\captionsetup{name=Procedure}
    \centering 
        \caption{Hiding the Quantum Message}\label{alg:Message}  \   
\begin{enumerate}
    \item Alice sends $(j_{ABC}, C(j_{ab})C(j_A)|m\rangle)$ to Bob.
    \item Bob removes the encryption via the inverse circuit $C(j_{ab})^{-1}$ associated with his secret shared key $j_{ab}$ with Alice, then adds the encryption $C(j_{bc})$ associated with his shared secret key $j_{bc}$ with Carol and sends her the following $(j_{AC}, C(j_{bc})C(j_A)|m\rangle)$.
    \item By removing the encryptions, Carol can read $(j_A,C(j_A)|m\rangle)$ and she can now obtain the message $m$ by applying the inverse circuit $C(j_A)^{-1}$.
\end{enumerate}
\end{figure}

In the next section we will define Association schemes with respect to isogeny graphs - a viewpoint that clarifies and streamlines the implementation of the `quantum' class group action.

\subsection{Cyclic Association Schemes and their Bose-Mesner Algebra}\label{subs:AS-BM} In this section we collect basic facts and definitions for cyclic association schemes and their Bose-Mesner Algebra, noting that many of these facts may still hold for association schemes in general. For more details the interested reader may refer to \cite{Godsil}, or to \cite{Mull} for a more concise account.

For any integer $n \geq 3$ we denote by $\C_{n}$ the cycle-graph of length $n$. It is a Cayley graph over $\mathbb{Z}/n\mathbb{Z}$ with generating set $\{1,n-1\}$ (modulo $n$). We let $d = \lfloor \frac{n}{2} \rfloor$ and for $0 \leq r \leq d$ we consider the following adjacency matrices $A_{r}$, where the indices of the matrices are computed modulo $n$: 
$$(A_{r})_{j,k} = \begin{cases} 1; \ \text{if} \ j-k \in \{r,-r\} \\
0; \ \text{o.w.}
\end{cases}.$$
For any matrix $M$ we denote by $M^{t}$ its transpose. We denote by $A_{0}$ the $n\times n$ identity matrix and by $J = J_{n}$ the all-ones matrix. We notice that $A_{1}$ is the adjacency matrix of $C_{n}$. The set of matrices
$$\A_{n} = \{A_{0},\ldots,A_{d}\}$$ can be proven to form a special type of a \emph{symmetric $d$-class association scheme} (e.g. \cite[pp.19-20]{Mull}), known as the \emph{cyclic association scheme of order $n$}. 

\begin{defn}\label{defn:AS}
    A set of $n \times n$-dimensional \emph{symmetric} $01$-matrices $\A=\{A_{0},\cdots,A_{d}\}$ is called a $d$-class symmetric association scheme if
    \begin{enumerate}
        \item $A_{0} = I_{n}$
        \item $\sum_{j=0}^{d} A_{j} = J_{n}$
        \item $\forall A_{j} \in \A, A_{j}^{t} \in \A$
        \item There are integers $p_{ij}(k)$, known as \emph{intersection parameters}, such that $$A_{i}A_{j} = \sum_{k=0}^{d}p_{ij}(k)A_{k}, \forall \ 0 \leq i, j \leq d.$$ 
        \item For $0 \leq j \leq d$ in particular, we have that $n_{j} \coloneqq p_{j,j}(0)$ is the degree of the regular graph corresponding to the adjacency $A_{j} \in \A$.  
    \end{enumerate}
\end{defn}

{\em Cyclic} symmetric association schemes in particular, enjoy further properties, which we summarize in the following remark.
\begin{remark}
    As a symmetric cyclic association scheme, $\A_{n}$ has the following properties:
    \begin{enumerate}
        \item From Definition~\ref{defn:AS} one can deduce that $A_{i}A_{j} = A_{j}A_{i}$ for all $0 \leq i,j \leq d$.
        \item Cyclic Association Schemes are distance-regular graphs.
        \item We let $\C \equiv \C_{n}$ denote the adjacency matrix of the directed cycle of order $n$. $\C$ is of the form $(\C)_{i,j} = 1$ if $j-k = 1$ and $(\C)_{i,j} = 0$, otherwise. Then every $A_{j} \in \A$ can be written in terms of $\C$ as follows:
        $$A_{j} = \begin{cases} \C^{j} + \C^{-j}; \ \text{if} \ j \notin \{0,n/2\} \\
        \frac{1}{2}(\C^{j}+\C^{-j}); \ \text{if} \ j \in \{0,n/2\}
        \end{cases}.$$
        \item The intersection numbers of $\A_{n}$ are explicitly known:
        $$p_{i,j}(k) = \begin{cases}
            1; \ \text{if} \ i-j \equiv \pm k \bmod n \ \text{or} \ i+j \equiv \pm k \bmod n\\
            0; \ \text{o.w.}
        \end{cases}.$$
    \item For every matrix $M \in \mathbb{C}[\A_{n}]$ there exists unique polynomial $f(x) \in \mathbb{C}[x]$ of degree $deg(f(x)) \leq n-1$ such that $M = f(\C)$.
    \item Let $\omega$ be a fixed primitive $n$th root of unity; for example $\omega = e^{2\pi i/n}$. The explicit eigenvalues of $\A_{n}$ are given as
    $$p_{j}(r) = \omega^{jr} + \omega^{-jr}.$$
    \end{enumerate}
\end{remark}

Given a symmetric association scheme $\A$, Definition~\ref{defn:AS} implies that the algebra $\mathbb{C}[\A]$, known as the \emph{Bose-Mesner} algebra of $\A$, is a commutative algebra of dimension $d+1$. It is a known result that all matrices of $\A_d$ are simultaneously diagonalizable (see, e.g. \cite{Godsil}), hence every $A \in \A_d$ can be written as 
\begin{equation}\label{eqn:decompn} A= \sum_{1\leq s \leq d} \theta_s E_s \ \in \A_d,\end{equation}
where the set 
$\{E_0, E_1, \ldots, E_d\}$ is the set of \emph{spectral idempotents} of the association scheme. In the special case of a cyclic symmetric association scheme, each $E_s$ is is given by
\[
E_s = \begin{cases}
    \frac{1}{n}\sum_{k=0}^{n-1} (\omega^{kr} + \omega^{-kr}) \C^k; \ 1 \leq s \leq \lfloor n/2 \rfloor \\[0.3cm]
    \frac{1}{n} \sum_{k=0}^{n-1} \omega^{kr} \C^{k}; \ s \in \{0, n/2\}.
\end{cases}
\]

An isogeny graph $\mathcal{G}_r=\mathcal{G}_{\Delta,S,q}$ with ideal class group
$\mathrm{Cl}(\mathcal O_\Delta)\cong \mathbb{Z}/r\mathbb{Z}$ (with $r$ an odd prime) can be written
as the edge–disjoint union of $|S|/2$ undirected $r$-cycles. In particular, if $|S|=\Theta(\log r)$, the graph is a union of
$\Theta(\log r)$-many $r$-cycles $A_s \in \A_d$. A walk on $\mathcal{G}_r$ corresponding to the element $g^e$ of \eqref{eqn:RandomElement} is a
sequence of $m$-many steps $c_s$, each one of them carried along one of these $r$-cycles. 

In the next section we analyze these isogeny graphs through the lens of association schemes and their
Bose–Mesner algebra, and we discuss isogeny graphs with respect to a purely quantum form of computing, namely that of continuous-time quantum walks (CTQW), before we continue to the implementation of our QOR via a universal quantum oracle, which we define in Section~\ref{sec:UOracle}. 

\subsection{Isogeny Graphs and Unitaries}\label{sec:GrUni}
Let us fix a fundamental discriminant $\D$ and a good prime $p$ that splits the Hilbert Class polynomial completely into $h(\D)$-many distinct roots over $\mathbb{F}_{q}$, $q = p^a$ for some $a \in \mathbb{N}$. Any subgraph of $\G \equiv \G(Cl(\O_{\D}),S)$ has an associated adjacency matrix $A$, which forms the Hamiltonian of the quantum system determined by the unitary operator 
$$U_{A}(t) = e^{-itA},$$ also known as the time evolution operator. The time evolution is described by 
$$|\psi(t)\rangle = e^{-itA}|\psi(0)\rangle,$$ 
where we follow the physicists’ notation and write column vectors $\psi \in \mathbb{C}^{n\times1}$ as kets $|\psi\rangle$. 

The unitary $U_A(t)$ corresponds to a walk on the graph with adjacency $A$, for time $t$. In isogeny-based schemes this would correspond to the secret walk of a user on the isogeny graph with adjacency $A$. As the underlying ideal class group action is commutative~\ref{rmk:ActionCom}, the walks are also expected to commute and hence the unitaries are expected to commute. This claim can also be verified rigorously, since one can show that the unitaries lie in the commutative Bose-Mesner Algebra. 
\begin{lemma}
    Let $\A_d$ denote the symmetric association scheme corresponding to the Cayley graph of a finite cyclic group isomorphic to $\mathbb{Z}/n\mathbb{Z}$, and for any $A \in \A_d$ consider the corresponding unitary $U_A(t) = e^{-itA}.$ Given any $A, B \in \A_d$, we have that the corresponding unitaries commute; i.e.
    $$U_A(t_A)U_B(t_B) = U_B(t_B)U_A(t_A).$$
\end{lemma}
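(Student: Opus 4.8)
The plan is to exploit the spectral structure already recorded in \eqref{eqn:decompn}: because $\A_d$ generates the commutative Bose--Mesner algebra $\mathbb{C}[\A]$, any two of its elements share a common system of spectral idempotents, and the unitaries $U_A$, $U_B$ inherit this joint diagonalization. Concretely, I would first recall that the idempotents $\{E_0,\dots,E_d\}$ form a complete orthogonal family, i.e.\ $E_sE_t=\delta_{st}E_s$ and $\sum_{s=0}^d E_s=I_n$, and that every $A\in\A_d$ decomposes as $A=\sum_s\theta_s E_s$ with real scalars $\theta_s$ (the eigenvalues $p_j(r)=\omega^{jr}+\omega^{-jr}$). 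This is the only structural input needed, and it is supplied by the simultaneous-diagonalizability statement preceding \eqref{eqn:decompn}.

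Next I would pass from $A$ to $U_A(t_A)=e^{-it_AA}$ by functional calculus. Writing $A=\sum_s\theta_sE_s$ and using orthogonality of the idempotents, one gets $A^k=\sum_s\theta_s^k E_s$ for every $k\ge 0$, so the exponential power series collapses termwise to
\[
U_A(t_A)=\sum_{s=0}^{d} e^{-it_A\theta_s}\,E_s,
\]
and likewise $U_B(t_B)=\sum_s e^{-it_B\mu_s}E_s$ for $B=\sum_s\mu_sE_s$. In particular each unitary again lies in $\mathbb{C}[\A]$, now as a complex linear combination of the \emph{same} idempotents. Alternatively, one may argue abstractly that $U_A(t_A)$ is a norm-limit of polynomials in $A$ and therefore lies in the finite-dimensional, hence closed, commutative algebra $\mathbb{C}[\A]$; I would mention this as a shortcut but carry out the explicit idempotent computation since it also makes the final step transparent.

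The conclusion is then immediate: multiplying the two displayed expansions and using $E_sE_t=\delta_{st}E_s$ once more gives
\[
U_A(t_A)\,U_B(t_B)=\sum_{s=0}^{d} e^{-it_A\theta_s}\,e^{-it_B\mu_s}\,E_s,
\]
an expression symmetric under interchange of the two factors because the scalar phases commute. The same computation with the factors reversed yields the identical right-hand side, so $U_A(t_A)U_B(t_B)=U_B(t_B)U_A(t_A)$.

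I do not expect a genuine obstacle here; the content is entirely that $A$ and $B$ live in a commutative algebra and that $e^{-itA}$ stays inside it. The only points requiring a line of justification are the convergence and termwise collapse of the exponential series (handled by finite-dimensionality of $\mathbb{C}[\A]$ together with orthogonality of the $E_s$) and, if one wishes to stress unitarity, the remark that each $A\in\A_d$ is real symmetric with real eigenvalues $\theta_s$, so the $e^{-it_A\theta_s}$ are genuine phases and $U_A(t_A)$ is indeed unitary. Thus the \emph{hard part} is really bookkeeping rather than any substantive difficulty.
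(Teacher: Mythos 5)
Your proof is correct and follows essentially the same route as the paper: both decompose $A$ over the spectral idempotents $\{E_s\}$ as in Equation~\eqref{eqn:decompn}, conclude that $U_A(t)=\sum_s e^{-it\theta_s}E_s$ lies in the commutative Bose--Mesner algebra $\mathbb{C}[\A_d]$, and deduce commutativity of the unitaries from commutativity of the algebra. Your version merely spells out the orthogonality bookkeeping (and fixes a sign the paper writes as $e^{it\theta_{A,s}}$) that the paper delegates to a citation of \cite[Lemma 3.2.4]{Mull}.
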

\begin{proof}
  For any $A \in \A_d$ it follows easily via Equation~\ref{eqn:decompn} (see, e.g. \cite[Lemma 3.2.4]{Mull}), that the corresponding unitary belongs to the Bose-Mesner algebra
\begin{equation}\label{eqn:UnitaryBM} 
U_A(t) = \sum_{1 \leq s \leq d} e^{i t \theta_{A,s}} E_s \ \in \mathbb{C}[\A_d].
\end{equation}  
Since the algebra is commutative, the claim follows.
\end{proof}

One can now express the walk on the isogeny graph induced by the action of the element $g^e, \ e \equiv \sum_{\ell=1}^{m} c_{s_\ell} \pmod r$ of Equation~\ref{eqn:RandomElement}, in terms of unitaries, as follows:
\begin{equation}\label{eqn:AdjacencyElement}
     g^e * j_0 \equiv \prod_{\ell = 1}^{m}U_{c_{s_{\ell}}}(t_{s_{\ell}}) |j_0\rangle.
\end{equation}

Further study of the isogeny walk via a CTQW is being carried out in a parallel project. In the next section we assume the existence of a universal oracle for the class group action and describe in detail how the QOR protocol runs under this assumption, leaving the discussion for the quantum oracle in the last subsection, Section~\ref{sec:UOracle}.

\subsection{Isogeny graphs in superposition}\label{sec:superpn}
As above, we assume for simplicity that $Cl(\mathcal{O}) \cong \mathbb{Z}/r\mathbb{Z}$, for $r$ a large prime of exponential size. The set
$$J = \{j_i : 0 \leq i \leq r-1\} \subseteq \mathbb{F}_p$$ contains the $r$-many $j$-invariants associated with $Cl(\mathcal{O})$, and $j_{0} \in J$ is a fixed public $j$-invariant. The corresponding cyclic association scheme $\mathcal{A}_{d}$, is of order $d = \frac{r-1}{2}$, also of exponential size.
\\
\begin{center}
    \textbf{Procedure 3.} Isogeny Graphs in Superposition and the QOR
\end{center}
\vspace{0.25cm}
When Bob notifies Carol that somebody wants to send her a message,
\begin{enumerate} 
\item Carol chooses and keeps secret a random element $\mathfrak{c} = g_C^{e_C} \in \mathrm{Cl}(\mathcal{O})$, constructed as discussed above. The associated isogeny graph corresponding to the action of $\mathfrak{c}$ on the $j$-invariants of the set $J$ is represented by an adjacency matrix, which we denote by $C$, and $C \in \A_d$. 
    \item Carol furthermore chooses two secret values $\omega,\Omega\in\mathbb{Z}/r\mathbb{Z}$ satisfying $0\le\omega<r$ and $0\le\Omega<r-\omega$. Here we assume both $\omega$ and $\Omega$ to be of polynomial size, although this restriction can be lifted relative to a ``global mapper oracle'', cf. Section~\ref{sec:UOracle}. Carol prepares a uniform superposition over $i \in \mathbb{Z}/\Omega\mathbb{Z}$
    \[
    \frac{1}{\sqrt{\Omega}} \sum_{i=0}^{\Omega-1} |i\rangle.
    \]
    \item She then loads $\ket{j_0}$ into another register and applies her unitary operator $U_C$, which we can interpret as a quantum oracle $f_C$,
    that computes in quantum parallelism the class group action, $*$, and gives the resulting $j$-invariant
     \[
    |i\rangle\otimes|j_0\rangle \mapsto |i\rangle\otimes|f_C(j_0)\rangle = |i\rangle\otimes|\mathfrak{c}^1 * j_0\rangle = |i\rangle\otimes|j_1\rangle.
    \]
    Applying the oracle $\omega$ times, yields the resulting state  
    \[
    \frac{1}{\sqrt{\Omega}} \sum_{i=0}^{\Omega-1} |i\rangle\otimes|j_\omega\rangle.
    \]
    \item Finally, Carol applies a ``mapper oracle'' as shown in Figure~\ref{fig:mapper}. This mapper oracle\footnote{The requirement for $\Omega$ to be of polynomial size implies that this mapper oracle is constructable in polynomial time from her shift oracle $U_C$. As $U_C$ needs to be compiled into a valid gate sequence, each gate in the $U_C$-sequence can be replaced with their controlled version to obtain a gate sequence for the controlled-$U_C$-gate.} applies her oracle $i$ many times if the control register is in the state $\ket i$. The resulting state is 
    \[
    \frac{1}{\sqrt{\Omega}} \sum_{i=0}^{\Omega-1} |i\rangle\otimes|j_{i+\omega}\rangle.
    \]
    
    If possible in practical implementations, Carol should retain the index-register (containing the $\ket i$ factor) at all times. However, the communication scheme remains secure if both the index- and $j$-register qubits are sent during the protocol. For simplicity, we will assume that Carol retains the index-register.
\item Carol sends the quantum state $$|\psi_0\rangle = \frac{1}{\sqrt{\Omega}}\sum_{i=0}^{\Omega-1}|j_{i+\omega}\rangle,$$ which corresponds to a part of the isogeny-cycle in superposition produced by the element $\mathfrak{c} \in Cl(O_{\D})$ acting on the set $J$, to Bob.

\item In the same way, Bob chooses $\mathfrak{b} = g_B^{e_B} \in Cl(O_{\D})$, which corresponds to his adjacency $B \in \A_d$. His associated quantum oracle $f_B$ computes the class group action, i.e.
$$f_B(j) = \mathfrak{b}*j,$$ and Bob computes and sends to Alice the following state
\[
|\psi_1\rangle = \frac{1}{\sqrt{\Omega}} \sum_{i=0}^{\Omega-1} |f_B(j_{i+\omega})\rangle.
\]
\item Similarly, Alice chooses her secret $\mathfrak{a} = g_A^{e_A} \in Cl(O_{\D})$, corresponding to some adjacency matrix $A \in \A_d$, and with her quantum oracle $f_A$ computes her class group action. She also attaches the message $|m\rangle$ which is encrypted with the circuit $C(j_A)$, $j_A = \mathfrak{a} * j_0$, (preventing Bob from reading the message) and the transport encryption layer $C(j_{ab})$ (preventing tracking of the message). Thus, she sends the following state
\[
|\psi_2\rangle = \frac{1}{\sqrt{\Omega}} \sum_{i=0}^{\Omega-1} |f_A(f_B(j_{i+\omega}))\rangle \otimes C(j_{ab})C(j_A)|m\rangle
\]
 back to Bob.
\item  Bob undoes his oracle $f_B$ on the $j$-invariants register, modifies the message as discussed in Section~\ref{sec:Message}, and obtains
\[
|\psi_3\rangle = \frac{1}{\sqrt{\Omega}} \sum_{i=0}^{\Omega-1} |f_A(j_{i+\omega})\rangle \otimes C(j_{bc})C(j_A)|m\rangle
\]
which he sends on to Carol.

\item  Carol  removes the transport encryption layer $C(j_{bc})$ from the message and splits the message register off. Combining the $j$-invariants register with the index-register again, she obtains two separate quantum states; the still encrypted message $C(j_A)|m\rangle$ and the key information state
    \[
    |\psi_4\rangle=\frac{1}{\sqrt{\Omega}} \sum_{i=0}^{\Omega-1} |i\rangle\otimes|f_A(j_{i+\omega})\rangle.
    \]
\item Carol now measures the key information state $|\psi_4\rangle$ and obtains a random state $|i\rangle\otimes|f_A(j_{i+\omega})\rangle$. From this, she can compute $j_A=\mathfrak{c}^{-i-\omega}f_A(j_{i+\omega})$ either classically or using her oracles (as we will in Section~\ref{sec:Implementation}).\footnote{With Carol's mapper oracle and repeated application of her shift oracle $U_C$, she can also compute $\frac{1}{\sqrt{\Omega}} \sum_{i=0}^{\Omega-1} |i\rangle\otimes|\mathfrak{c}^{-i-\omega}f_A(j_{i+\omega})\rangle=\sum_{i=0}^{\Omega-1} |i\rangle\otimes|j_A\rangle$ without measurement and then simply measure the $j$-register.}
\item Finally, Carol can now apply $C(j_A)^{-1}$ to the message register and retrieve Alice's secret message $|m\rangle$. 
\end{enumerate}
    \begin{figure}[h!]
        \centering
        \begin{align*}
            \Qcircuit @C=1em @R=.7em {
              \lstick{\text{index reg }q_0} & \ctrl{3} & \qw & \qw & \qw & \qw & \qw & \qw & \qw \\
              \lstick{\text{index reg }q_1} & \qw & \ctrl{2} & \ctrl{2} & \qw & \qw & \qw & \qw & \qw \\
              \lstick{\text{index reg }q_2} & \qw & \qw & \qw & \ctrl{1} & \ctrl{1} & \ctrl{1} & \ctrl{1} & \qw \\
              \lstick{j\text{-reg}} & \gate{U_C} & \gate{U_C} & \gate{U_C} & \gate{U_C} & \gate{U_C} & \gate{U_C} & \gate{U_C} & \qw \\
            }
        \end{align*}
        \caption{Example Carol's of the ``mapper oracle'' for a three-qubit index register. In general, this simple mapper oracle implementation requires $2^{\lceil\log_2\Omega\rceil}-1$ controlled shift oracles~$U_C$.}
        \label{fig:mapper}
    \end{figure}

\begin{remark}
    It should be noted that the message encryption via $C(j_A)\ket m$ could also be replaced with a quantum public key encryption system. In that case, the procedure required to communicate $j_A$ from Alice to Carol is not necessary as Carol retains her private key and the cryptosystem is asymmetric, i.e., cannot be broken based on the public knowledge of the public key. However, that also requires combining two separate encryption systems within the quantum onion routing communications protocol, which may or may not be wise. There are multiple dimensions to consider here. On the one hand, having two different underlying hard mathematical problems means that if one is broken, the message may still be safe. On the other, having compatible schemes is desirable in terms of design principles, as is also evident in our construction which combines circuits of $j$-invariants resulting from both the Diffie-Hellmann as well as the layered encryptions. At the same time however, using the same underlying problem for both schemes can compromise the quantum onion routing protocol because breaking that underlying problem reveals everything, while using different underlying problems means that if the $C(j_A)$ layer is broken, then any intermediary (Bob) can read the message, while breaking the Diffie-Hellman layer allows for tracking of the communication; thus, doubling the failure points.
\end{remark}

\subsection{A Universal Quantum Oracle}\label{sec:UOracle}
The ideal class group $Cl(O_{\D})$ of an imaginary quadratic number field is isomorphic to the so-called \textit{form class group} of primitive positive definite binary quadratic forms of discriminant $\D$ (see, e.g. \cite[Theorem 5.30]{Cox}). The explicit isomorphism allows for a straightforward transition between the elements of the two groups. As a result, one can work in the form class group instead, which yields a very efficient arithmetic in the group, when considered as composition of forms. An algorithm of Sch\"onhage \cite{Sch} gives the time for reduction as well as composition of forms to be \begin{equation}\label{eqn:time1}O(m(|\D|) \log |\D|) \sim O(\mathrm{poly} \log |\D|),\end{equation}
where $m(|\D|) \leq \log(|\D|)^2$ denotes the time for multiplication of two integers of size $\sim |\D|$. 

As above, we assume that the class group $G$ is cyclic of prime order $r$. Given any $g \in G$ and $1\leq s \leq r$, we can perform exponentiation $g^s$ by repeated squaring (see the Circuit in Figure~\ref{fig:RSq}), and together with equation~\ref{eqn:time1} above, yield the total time \begin{equation} \label{eqn:time2} O(\log s \ \mathrm{poly} \log |D|) \sim O(\mathrm{poly} \log r).\end{equation}
     
          \begin{figure}[h!]
        \centering
      \begin{align*}
        \Qcircuit @C=1em @R=.7em {
          \lstick{\text{index reg }q_0} & \ctrl{7} & \qw & \qw & \qw & \qw & \qw & \qw & \qw \\
          \lstick{\text{index reg }q_1} & \qw & \ctrl{6} & \qw & \qw & \qw & \qw & \qw & \qw \\
          \lstick{\text{index reg }q_2} & \qw & \qw & \ctrl{5} & \qw & \qw & \qw & \qw & \qw \\
          \lstick{\text{index reg }q_3} & \qw & \qw & \qw & \ctrl{4} & \qw  & \qw & \qw & \qw \\
          \lstick{\text{index reg }q_4} & \qw & \qw & \qw & \qw & \ctrl{3} & \qw & \qw & \qw  \\
          \lstick{\text{index reg }q_5} & \qw & \qw & \qw & \qw & \qw & \ctrl{2} & \qw & \qw \\
          \lstick{\text{index reg }q_6} & \qw & \qw & \qw & \qw & \qw & \qw & \ctrl{1} & \qw \\
          \lstick{j\text{-reg}} & \gate{U_g} & \gate{U_{g^2}} & \gate{U_{g^4}} & \gate{U_{g^8}} & \gate{U_{g^{16}}} & \gate{U_{g^{32}}} & \gate{U_{g^{64}}} & \qw \\
        }
      \end{align*}
     \caption{Circuit for repeated squaring for the universal quantum oracle. This circuit applies $g^s$ to the state in the $j$-register depending on the binary decomposition of $s$ stored in the index register.}
      \label{fig:RSq}
    \end{figure}
    
Denote by $U(g,j)$ the global oracle $$U: Cl(O_{\D}) \times J_\D \rightarrow J_\D$$
$$(g,j) \mapsto g*j,$$
which is injective in each argument given the properties of the class group action, but of course it is not jointly injective.

Each user computes their random element, which is of the form $g^e$ as in Equation~\ref{eqn:RandomElement}, and applies this global oracle accordingly. As the exponent $e$ is of size $O(\log r)$, we need $O(\log r)$-many controlled $U(g^{2^k},-)$ gates, and the total time for this global oracle, given Equations~\ref{eqn:time1}~\&~\ref{eqn:time2}, remains at $O(\mathrm{poly} \log r)$. Like everywhere else in the paper, we do not analyze the runtime of the class-group action per se (or isogeny computation), as implementations are an active research area with steadily improving results.

Now, given such a global oracle, the restrictions on $\omega$ and $\Omega$ to be of polynomial size can be lifted, and arbitrary sub-intervals of the entire cycle corresponding to the chosen element $g^e$ can put in superposition, including the entire cycle itself. With the entire cycle in superposition, the protocol can be formed as shown in Procedure~\ref{alg:superposition}.
\addtocounter{figure}{-1}
\begin{figure}[h!]
\captionsetup{name=Procedure}
    \centering 
        \caption{The Entire Isogeny Cycle in Superposition}\label{alg:superposition}  \   
\begin{enumerate}
    \item Carol prepares the state 
    $$|\psi_0 \rangle = \frac{1}{\sqrt{r}}\sum_{s=0}^{r-1} |s\rangle \otimes |j_0 \rangle.$$
    \item She applies her universal mapper oracle as shown in Figure~\ref{fig:RSq} with $g=\mathfrak{c}$ 
    $$|\psi_1\rangle = \frac{1}{\sqrt{r}}\sum_{s=0}^{r-1} |s\rangle \otimes |\mathfrak{c}^s*j_0 \rangle = \sum_s |s\rangle \otimes |j_s \rangle$$
    and sends out only the second register ($j$-register), retaining the first (index register)
    $$|\psi_2\rangle = \frac{1}{\sqrt{r}}\sum_{s=0}^{r-1} |j_s \rangle.$$
    \item Bob applies the universal oracle $U(\mathfrak{b}, -)$ and sends the following state to Alice
   $$|\psi_3\rangle =  \frac{1}{\sqrt{r}}\sum_{s=0}^{r-1} |\mathfrak{b}*j_s \rangle.$$
   \item Alice applies the universal oracle $U(\mathfrak{a}, - )$ and sends the following state back to Bob
      $$|\psi_4\rangle =  \frac{1}{\sqrt{r}}\sum_{s=0}^{r-1} |\mathfrak{ab}*j_s \rangle.$$
      \item Bob applies the inverse oracle $U(\mathfrak{b}^{-1},-)$ and sends to Carol the following state
    $$|\psi_5\rangle = \frac{1}{\sqrt{r}}\sum_{s=0}^{r-1}|\mathfrak{a}*j_s\rangle.$$
    \item Carol recombines the received $j$-register with the retained index register and applies her inverse universal mapper oracle and obtains
        $$|\psi_{\text{final}}\rangle = \frac{1}{\sqrt{r}}\sum_{s=0}^{r-1} |s\rangle\otimes |\mathfrak{a}*j_0\rangle.$$
        Any measurement will reveal Alice's key $j_A$.
\end{enumerate}
\end{figure}

\begin{remark}
    It should be noted that the oracle in Figure~\ref{fig:RSq} is used in two different ways. For any given $g$ and secret exponent $e$, any actor can load $e$ into the index register which turns the oracle into the shift oracle. For example, if Carol chooses $e_C$, then her shift oracle should act as $\ket j\mapsto\ket{\mathfrak{c}*j}=\ket{g^{e_C}*j}$ which executes with $e_C$ loaded into the index register in Figure~\ref{fig:RSq}. All actors need this application of Figure~\ref{fig:RSq}. However, Carol also needs the mapper oracle to implement the powers of $\mathfrak{c}$ which can be represented as in Figure~\ref{fig:RSq} but with $g=\mathfrak{c}$ and with $i$ loaded into the index register. This executes the mapper operation $\frac{1}{\sqrt{\Omega}}\sum_{i=0}^{\Omega-1}\ket{i}\otimes\ket{j_\omega}\mapsto\frac{1}{\sqrt{\Omega}}\sum_{i=0}^{\Omega-1}\ket{i}\otimes\ket{j_{i+\omega}}$. Carol can implement this mapper oracle either by directly implementing the circuit of Figure~\ref{fig:RSq} with $g=\mathfrak{c}$ and loading $i$ into the index register, or by nesting Figure~\ref{fig:RSq} using one register to hold $i$, one register to hold $e_C$, and constructing the controlled $U_{g^{2^{k}e_C}}$ gates instead of the $U_{g^{2^k}}$ gates in Figure~\ref{fig:RSq}. Instead of nesting, the latter can also be achieved using a quantum arithmetic circuit computing the products $2^ke_C$ and its output register is used as index register for Figure~\ref{fig:RSq}. 
\end{remark}

\section{An Example Implementation with $5$ actors}\label{sec:Implementation}
In this section we present a fully worked, minimal Qiskit example with five actors A, B, C, D, and~E. Here, A wants to send a message to E with B, C, and D acting as intermediaries similar to the classical TOR protocol. For simplicity, we will only cover the key information part of the protocol as the message part is equivalent to classical Diffie-Hellman exchanges and applying simple circuitry for scrambling. We also remove the classical communication overheads of setting up communications channels between actors.

For simplicity, we will implement the class group action oracles by classically pre-computing the corresponding cycles and implementing the corresponding shift operators. The direct implementation of these oracle gates is beyond the scope of this implementation as we want to focus on the communication part of the protocol.

For this example, we choose the ideal class group of the imaginary quadratic number field $\mathbb{Q}(\sqrt{-167})$, which is of order $11$. The corresponding Hilbert Class Polynomial has very large coefficients but it splits completely into eleven linear factors over the finite field $\mathbb{F}_{311}$. The set of $j$-invariants containing its roots is 
$$J = \{307, 248, 236, 223, 213, 209, 193, 182, 116, 12, 1\} \subset \mathbb{F}_{311}.$$ 

This example has five, $\frac{11-1}{2}$, non-trivial undirected cycles, as shown in Figure~\ref{fig:five}, and we attribute a different one as the secret choice made by each actor. The publicly known $j_0$ is chosen as $1$. These computations were performed in PARI/GP \cite{PARI} and SageMath \cite{SageMath}.
\begin{figure}[htbp]
  \centering

  \begin{minipage}{0.9\textwidth} 
    \centering
    \begin{subfigure}{0.44\textwidth}
      \centering
      \includegraphics[width=\linewidth,height=0.22\textheight,keepaspectratio]{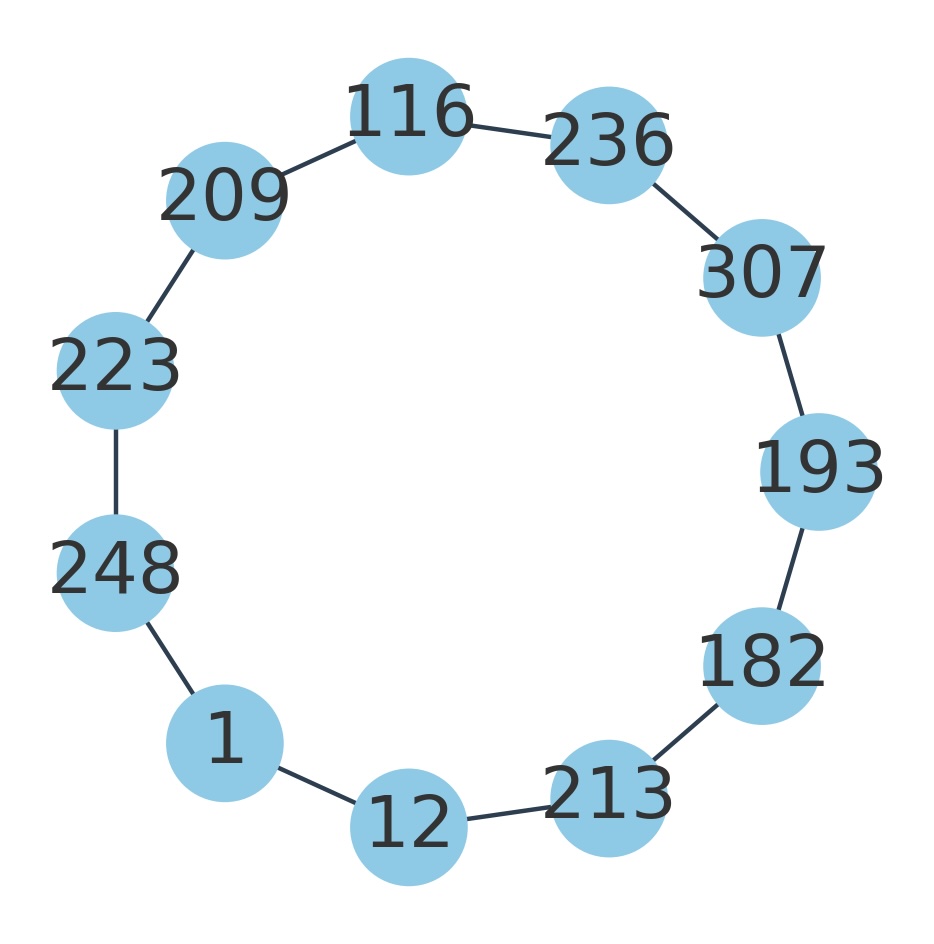}
    \end{subfigure}\hfill
    \begin{subfigure}{0.44\textwidth}
      \centering
      \includegraphics[width=\linewidth,height=0.22\textheight,keepaspectratio]{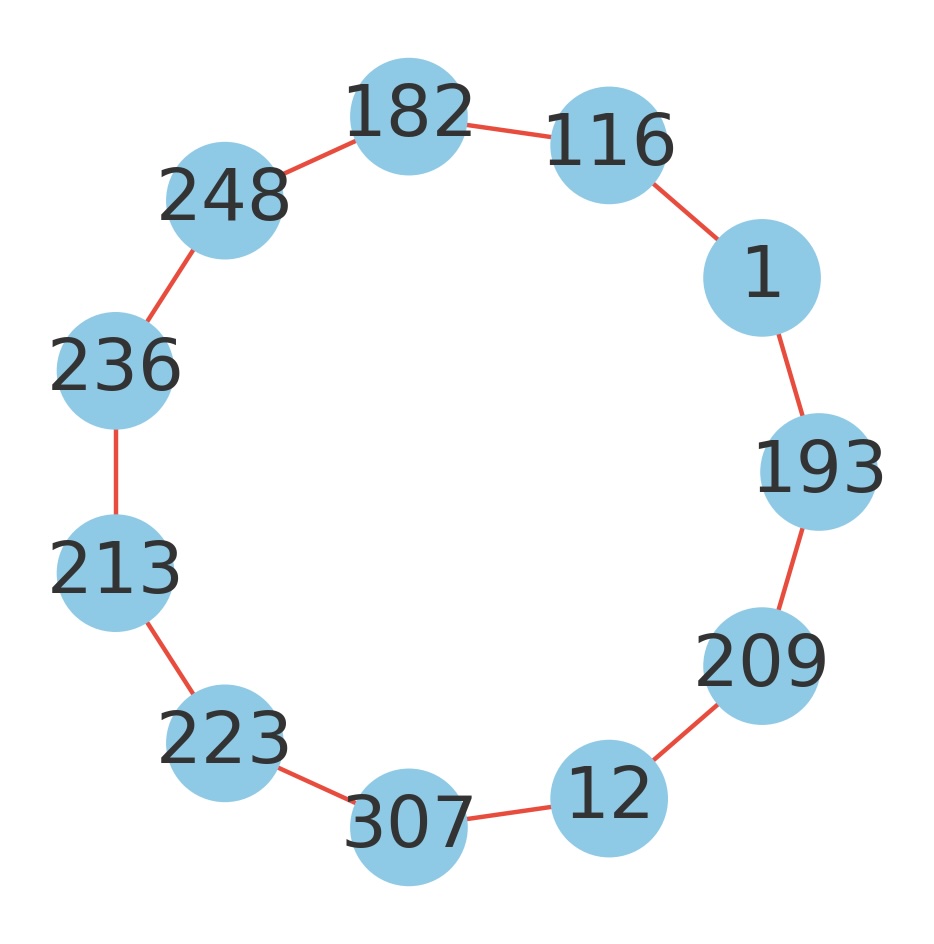}
    \end{subfigure}
  \end{minipage}

  \vspace{0.6em} 

  \begin{minipage}{0.9\textwidth}
    \centering
    \begin{subfigure}{0.29\textwidth}
      \centering
      \includegraphics[width=\linewidth,height=0.22\textheight,keepaspectratio]{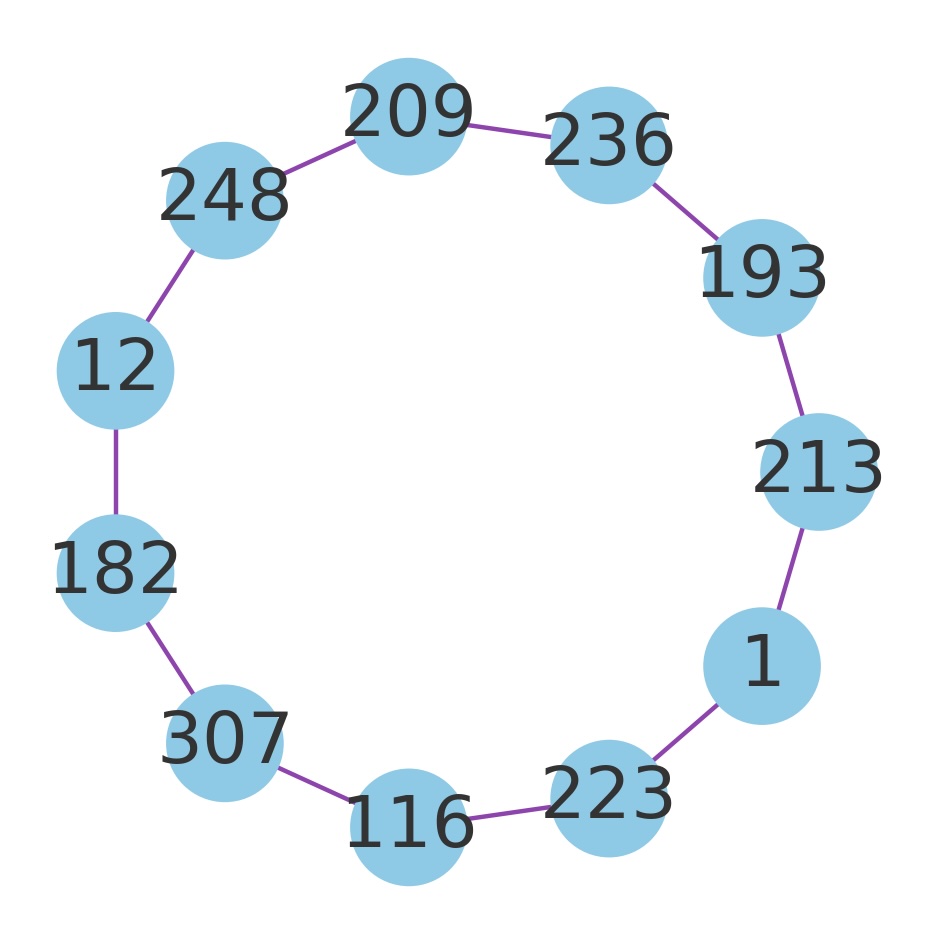}
    \end{subfigure}\hfill
    \begin{subfigure}{0.29\textwidth}
      \centering
      \includegraphics[width=\linewidth,height=0.22\textheight,keepaspectratio]{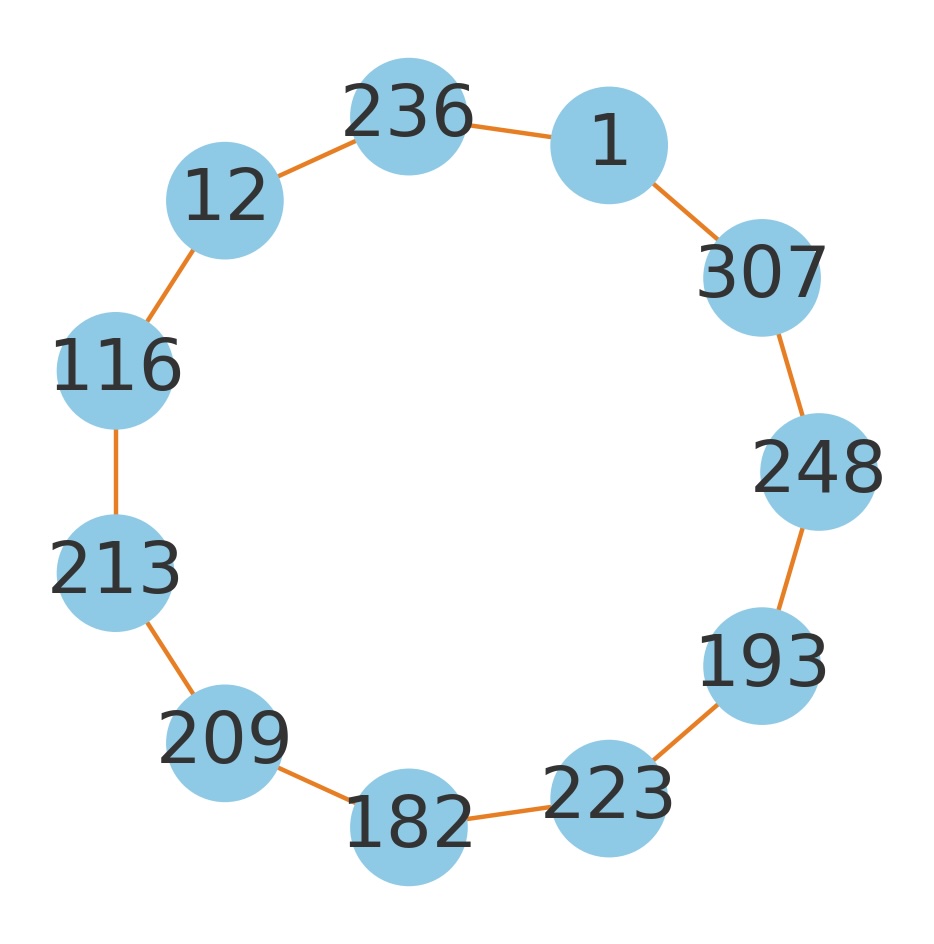}
    \end{subfigure}\hfill
    \begin{subfigure}{0.29\textwidth}
      \centering
      \includegraphics[width=\linewidth,height=0.22\textheight,keepaspectratio]{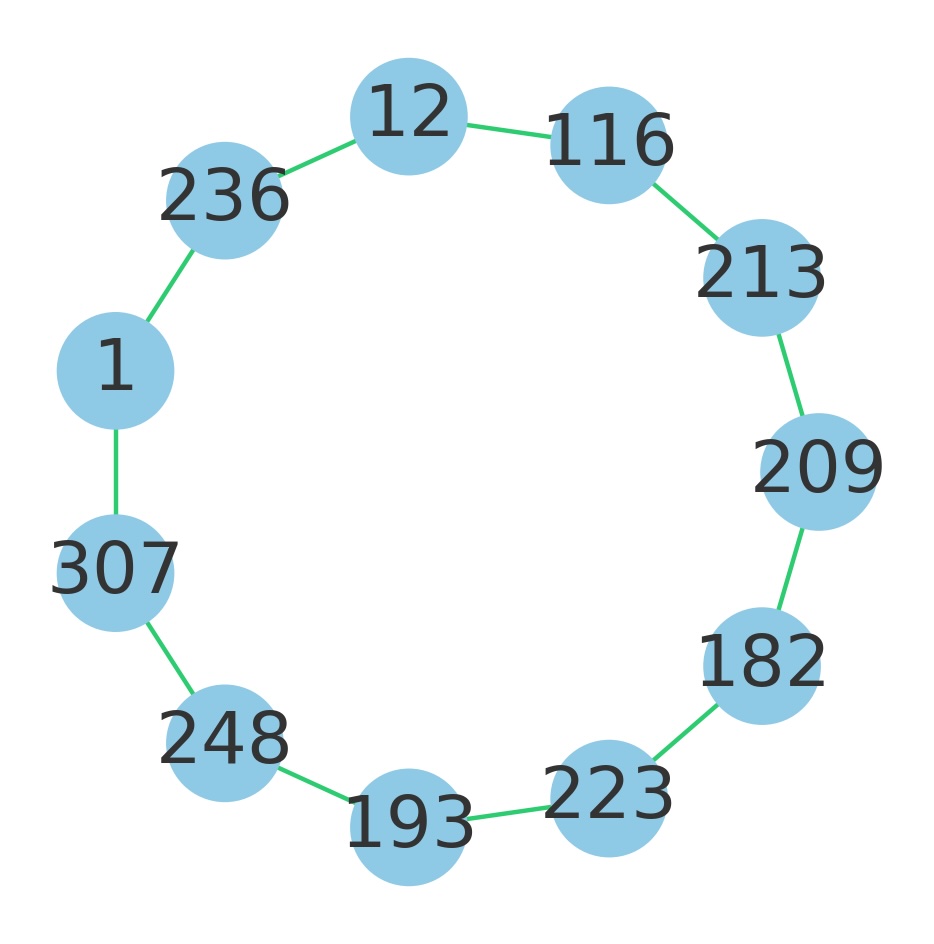}
    \end{subfigure}
  \end{minipage}

  \caption{The five isogeny cycles corresponding to elements of norms $3, 7, 11, 19$ and $97$, in the class group $Cl(O_{-167})$ of order $11$. Each one connects the $j$-invariants in a different way.}
  \label{fig:five}
\end{figure}

For this implementation, we will use Qiskit and load the following libraries.
\begin{minted}[mathescape,
  linenos,
  numbersep=5pt,
  gobble=2,
  frame=lines,
  framesep=2mm,
  breaklines]{python}
    import numpy as np
    import copy
    
    from qiskit import QuantumCircuit, QuantumRegister, ClassicalRegister, transpile
    from qiskit.circuit.library import PhaseGate, UnitaryGate
    from qiskit_aer import AerSimulator
    from qiskit.visualization import plot_histogram
    
    from <private_library> import basic_comparator
\end{minted}
The final library loading \texttt{basic\_comparator} is not publicly available at the time of writing this paper. It supplies the comparator $i<\Omega$ which is required for the oracle in an optional Grover step to filter for only the indices $i<\Omega$. If $\Omega$ is chosen as a power of $2$, this \texttt{basic\_comparator} is not required. Its general implementation has been used previously, cf. e.g. in~\cite{PhysRevX.8.041015,stackexchange.comparator}. 

We may then define the $5$ cycles and give one to each actor. All cycles are represented to start with the public $j$-invariant $j_0=1$. In an actual implementation using oracles implementing the class group action, explicit knowledge of these cycles is not required.
\begin{minted}[mathescape,
  linenos,
  numbersep=5pt,
  gobble=2,
  frame=lines,
  framesep=2mm,
  breaklines,
  firstnumber=10]{python}
    aj = [1,213,193,236,209,248,12,182,307,116,223]
    bj = [1,12,213,182,193,307,236,116,209,223,248]
    cj = [1,116,182,248,236,213,223,307,12,209,193]
    dj = [1,236,12,116,213,209,182,223,193,248,307]
    ej = [1,209,307,213,248,116,193,12,223,236,182]
\end{minted}
This means that the secret key that A needs to communicate to E is $j_A=\mathfrak{a}*j_0$ or \texttt{aj[1]=213}.

We choose $\omega=2$ and $\Omega=5$ which implies that we require $Q=\lceil\log_2\Omega\rceil=3$ many qubits for the index register. Similarly, since the field is $F_{311}$, we require $N=\lceil\log_2311\rceil=9$ many qubits to represent the $j$-invariants. Using the cycles defined above, we can define the oracles that shift one element to the right and act as the identity on non-$j$-invariants for all actors and their inverses (the left shifts) for all but A (A's inverse is not required).
\begin{minted}[mathescape,
  linenos,
  numbersep=5pt,
  gobble=2,
  frame=lines,
  framesep=2mm,
  breaklines,
  firstnumber=15]{python}
    # defining constants and oracle matrices
    omega = 2 # arbitrarily chosen
    Omega = 5 # < number of j-invariants minus omega, chosen arbitrarily
    N = int(np.ceil(np.log2(max(ej))))
    Q = int(np.ceil(np.log2(Omega)))

    A = np.zeros((2**N,2**N))
    B = np.zeros((2**N,2**N))
    C = np.zeros((2**N,2**N))
    D = np.zeros((2**N,2**N))
    E = np.zeros((2**N,2**N))
    
    for j in range(2**N):
        if j in ej:
            A[aj[(aj.index(j)+1)%len(ej)],j] = 1
            B[bj[(bj.index(j)+1)%len(ej)],j] = 1
            C[cj[(cj.index(j)+1)%len(ej)],j] = 1
            D[dj[(dj.index(j)+1)%len(ej)],j] = 1
            E[ej[(ej.index(j)+1)%len(ej)],j] = 1
        else:
            A[j,j] = 1
            B[j,j] = 1
            C[j,j] = 1
            D[j,j] = 1
            E[j,j] = 1

    # define qiskit gates from matrices
    UA = UnitaryGate(A)
    UB = UnitaryGate(B)
    UBinv = UnitaryGate(B.transpose())
    UC = UnitaryGate(C)
    UCinv = UnitaryGate(C.transpose())
    UD = UnitaryGate(D)
    UDinv = UnitaryGate(D.transpose())
    UE = UnitaryGate(E)
    UEinv = UnitaryGate(E.transpose())
\end{minted}
We also need to implement the mapper oracle for E. However, the relatively large value of~$N$ and the fact that we do not have ``real'' circuitry for the oracles \texttt{UE} and \texttt{UEinv}, render the construction of the mapper -- as described in Figures~\ref{fig:mapper} or~\ref{fig:RSq} -- computationally heavy. Instead, we will implement a mapper $\ket i\mapsto \ket{j_i}$ according to E's cycle. Thus, rather than loading $\ket{j_0}$ into the $j$-register and entangling it with the index register via the mapper, we will create entanglement by producing the state $\ket i\otimes \ket i$ and then using the ``fake'' mapper implementation to obtain $\ket i\otimes \ket{j_i}$. This ``fake'' mapper necessarily has super-polynomial complexity as otherwise it could be used to break the cryptosystem.
\begin{minted}[mathescape,
  linenos,
  numbersep=5pt,
  gobble=2,
  frame=lines,
  framesep=2mm,
  breaklines,
  firstnumber=51]{python}
    # mapper implementation not efficient but chosen because turning UE and UEinv into controlled gates is very resource intensive
    ejextended = copy.copy(ej)
    for j in range(2**N):
        if j not in ej:
            ejextended.append(j)
    
    mapper = np.zeros((2**N,2**N))
    for j in range(len(ejextended)):
        mapper[ejextended[j],j] = 1
    
    UM = UnitaryGate(mapper)
    UMinv = UnitaryGate(mapper.transpose())
\end{minted}
We can now set up the quantum circuit in Qiskit.
\begin{minted}[mathescape,
  linenos,
  numbersep=5pt,
  gobble=2,
  frame=lines,
  framesep=2mm,
  breaklines,
  firstnumber=63]{python}
    i_register = QuantumRegister(Q, "i")     # index register
    anc_register = QuantumRegister(1, "anc") # ancilla register for Grover (to make sure indices smaller than Omega)
    j_register = QuantumRegister(N, "j")     # j-invariant register
    
    # corresponding classical registers
    cl_i_register = ClassicalRegister(len(i_register), "cl_i")
    cl_anc_register = ClassicalRegister(len(anc_register), "cl_anc")
    cl_j_register = ClassicalRegister(len(j_register), "cl_j")
    
    # initialize quantum circuit
    qc = QuantumCircuit(
        i_register, anc_register, j_register, 
        cl_i_register,cl_anc_register, cl_j_register
    )
    
    # Step 1: Apply Hadamard to all qubits
    qc.h(i_register)
    qc.barrier()
\end{minted}
All of this is executed by E upon being told that they are to receive a message via the quantum onion routing protocol. It should be noted that the ancilla register is only required for values of $\Omega$ that are not powers of $2$.

At this point, the current quantum state (ignoring the ancilla qubit) is $\frac{1}{\sqrt{8}}\sum_{i=0}^{7}|i\rangle\otimes|0\rangle$. In other words, we have an equal superposition of all computational basis states in the index register, and the $|0\rangle$ state in the $j$-invariants register. We now add the optional Grover step to filter for only the states $i<\Omega=5$.
\begin{minted}[mathescape,
  linenos,
  numbersep=5pt,
  gobble=2,
  frame=lines,
  framesep=2mm,
  breaklines,
  firstnumber=81]{python}
    # Optional gover circuit if Omega is not a power of 2 
    if Omega<2**Q:
        # Filter out numbers larger than Omega-1. (Marking)
        qc.append(basic_comparator(Q, Omega), list(i_register) + list(anc_register))
        # We clearly now have over 50% solutions, due to the way we have encoded the bits.
        
        # So we can use Grover's exact rotation to remove any non-solutions.
        # We know the angle will be:
        theta = np.arccos(1 - (2**Q / (2*Omega)))
        qc.append(PhaseGate(theta), [anc_register[0]])
        
        qc.append(basic_comparator(Q, Omega, reverse=True), list(i_register) + list(anc_register))
        
        # Apply Grover operator
        qc.h(list(i_register))
        qc.x(list(i_register))
        qc.mcp(theta,list(i_register[:-1]), i_register[-1])
        qc.x(list(i_register))
        qc.h(list(i_register))
    
        qc.barrier()
\end{minted}
With this Grover step, the current quantum state is now $\frac{1}{\sqrt{\Omega}}\sum_{i=0}^{\Omega-1}|i\rangle\otimes|0\rangle$ (ignoring the ancilla qubit). As the ancilla is no longer required for any calculations, we will forget about its existence going forward.

Next, E needs to prepare the corresponding $|j_i\rangle$ for each $|i\rangle$ in the $j$-invariants register. This is done in two steps. First copy $|i\rangle$ into the $j$-register and then apply the ``fake'' mapper oracle $|i\rangle\mapsto|j_i\rangle$ to the $j$-invariants register. Using the correct mapper as described in Figures~\ref{fig:mapper} or~\ref{fig:RSq}, we would need to load $\ket{j_0}$ into the $j$-register and apply the correct mapper to both index- and $j$-register to create $|i\rangle\mapsto|j_i\rangle$. 

E also applies the shift with respect to $\omega$ to produce $|i\rangle\mapsto|j_{i+\omega}\rangle$.
\begin{minted}[mathescape,
  linenos,
  numbersep=5pt,
  gobble=2,
  frame=lines,
  framesep=2mm,
  breaklines,
  firstnumber=102]{python}
    # generate equal superposition of all |i,j_(i+omega)> states
    for j in range(len(i_register)):
        qc.cx(i_register[j],j_register[j])
    qc.unitary(UM,j_register,label="UM")
    for _ in range(omega):
        qc.unitary(UE,j_register,label="UE")
    
    qc.barrier()
\end{minted}
Thus, E has now created the state $\frac{1}{\sqrt{\Omega}}\sum_{i=0}^{\Omega-1}|i\rangle\otimes|j_{i+\omega}\rangle$. E should retain the index register and send the $j$-register to D who applies their shift.
\begin{minted}[mathescape,
  linenos,
  numbersep=5pt,
  gobble=2,
  frame=lines,
  framesep=2mm,
  breaklines,
  firstnumber=110]{python}
    # E sends j-register to D, D applies UD    
    qc.unitary(UD,j_register,label="UD")
\end{minted}
This creates the state $\frac{1}{\sqrt{\Omega}}\sum_{i=0}^{\Omega-1}|i\rangle\otimes|f_D(j_{i+\omega})\rangle$ where the left register is retained by E and D has the $j$-register. D further sends the $j$-register up the chain eventually reaching A via C and B who all execute their respective shift oracles.
\begin{minted}[mathescape,
  linenos,
  numbersep=5pt,
  gobble=2,
  frame=lines,
  framesep=2mm,
  breaklines,
  firstnumber=112]{python}
    # D sends j-register to C, C applies UC    
    qc.unitary(UC,j_register,label="UC")
    # C sends j-register to B, B applies UB
    qc.unitary(UB,j_register,label="UB")
    # B sends j-register to A, A applies UA    
    qc.unitary(UA,j_register,label="UA")
\end{minted}
At this point, the current quantum state is $\frac{1}{\sqrt{\Omega}}\sum_{i=0}^{\Omega-1}|i\rangle\otimes|f_A(f_B(f_C(f_D(j_{i+\omega}))))\rangle$ where the index register is retained by E and the $j$-register is held by A. A also attaches the encrypted message at this point, but we will ignore that for simplicity.

A now sends the registers back to E via B, C, and D who each use their inverse oracles on the $j$-register as well as their respective scrambling operations for the Diffie-Hellman layer on the message. 
\begin{minted}[mathescape,
  linenos,
  numbersep=5pt,
  gobble=2,
  frame=lines,
  framesep=2mm,
  breaklines,
  firstnumber=118]{python}
    # A sends state to B, B applies UBinv
    qc.unitary(UBinv,j_register,label="UBinv")
    # B sends state to C, C applies UCinv
    qc.unitary(UCinv,j_register,label="UCinv")
    # C sends state to D, D applies UDinv
    qc.unitary(UDinv,j_register,label="UDinv")
\end{minted}
Since the group action is commutative, the oracles commute; i.e., $f_{X^{-1}}\circ f_A\circ f_X=f_A\circ f_{X^{-1}}\circ f_X=f_A$ hold for $X\in\{B,C,D\}$, and we obtain that the final quantum state sent from D to E is $\frac{1}{\sqrt{\Omega}}\sum_{i=0}^{\Omega-1}|i\rangle\otimes|f_A(j_{i+\omega})\rangle$.

\begin{remark}
\begin{enumerate}
    \item Throughout this entire process, an attacker has only access to the $j$-register. Thus, measuring it yields exactly one of the states $|f_A(f_B(f_C(f_D(j_{i+\omega}))))\rangle$, $|f_A(f_C(f_D(j_{i+\omega})))\rangle$, $|f_A(f_D(j_{i+\omega}))\rangle$, or $|f_A(j_{i+\omega})\rangle$, depending on the point in the chain they apply their attack, with $i$ drawn uniformly. Thus, they will observe a random $j$-invariant drawn from a uniform distribution. 
    \item If the malicious attacker makes copies via imperfect cloning at a single point in the communication chain, then they will receive a sample of $j$-invariants as implemented in the cycle by E, possibly further shifted by generators from A, B, C, or D. While the attacker knows that these come from a $\Omega$-large sub-section of E's cycle, the order is unknown. Furthermore, due to imperfect cloning errors, some of these measurements will yield non-implemented states of the $j$-register, which may correspond to non-$j$-invariants or $j$-invariants that are not on the $\Omega$-size sub-cycle E has put into superposition. 
    \item If an attacker measures at multiple points in the chain, then the onion routing protocol has been compromised because the chain is supposed to be hidden. But even in that case, there are two options for the attacker. 
    \begin{enumerate}
        \item The attacker measures the state at each transmission. In that case, their attack is reduced to breaking the problem in the classical setting. 
        \item The attacker makes imperfect copies at each transmission. Then, they receive sets of $j$-invariants that but they do not know how they relate, there will be $\Omega!$ many possible neighbors within the data. Furthermore, there will be errors in the data that need to be handled. 
        For example, if the attacker measures between E and D as well as between D and C, then each imperfect copy may have a state infidelity of up to $1/6$. This means that with probability $5/6$ they measure a valid $j$-invariant from the implemented cycle, and with probability $1/6$ they end up in the erroneous contribution which comprises unknown contributions of invalid $j$-register states, $j$-invariants that do not correspond to the implemented cycle, and $j$-invariants that correspond to the implemented cycle. The attacker therefore needs to identify the false states and find the pairings $(j,\mathfrak{d}*j)$ of implemented $j$-invariants $j$ with their shifts $\mathfrak{d}*j$ as executed by D. There are $\Omega!$ many such pairing arrangements even if all false $j$-invariants have already been filtered out.
    \end{enumerate}
    \item Even in the case where the index register is not retained by E and the attacker intercepts at the last step, they obtain a random pair $(i,f_A(j_{i+\omega}))$ which cannot be used to compute $j_A=f_A(j_0)$ since $\omega$ is private information only E has. 
    
    The best case scenario an attacker can hope for is to observe the initial transmission from E to D and obtain the pair $(i,j_{i+\omega})$, as well as the final transmission from D to E where they obtain the pair $(i,f_A(j_{i+\omega}))$. Since the quantum state has collapsed to a classical state in the first observation, these will be the same value of $i$. Thus, a malicious attacker would need to find an element $\mathfrak{m} \in Cl(O_{\D})$ such that $\mathfrak{m}*j_0=j_{i+\omega}$, and use it to retrieve $j_A=\mathfrak{m}^{-1}*f_A(j_{i+\omega})$. Alternatively, the malicious attacker could intercept both upstream and downstream messages between any two actors attempt to find $\mathfrak{m}\in Cl(O_{\D})$ such that $\mathfrak{m}*f_X(j_{i+\omega})=f_A(f_X(j_{i+\omega}))$ for $f_X$ being the identity, $f_D$, $f_C\circ f_D$, or $f_B\circ f_C\circ f_D$ depending on the point of the attack, and use it to retrieve $j_A=\mathfrak{m}^{-1}*f_A(f_X(j_{i+\omega}))$. Solving any of these problems is, of course, as hard as breaking the scheme itself, as this is the central hardness assumption for schemes based on the ideal class group action. 
    
    Furthermore, even if this problem is solved and the malicious attacker gains access to $j_A$, then they still need to break the Diffie-Hellman encryption layer, i.e., they need to successfully execute two more such attacks.
    \item The transmission of the index register should be avoided, because using imperfect copies, an attacker could now get the additional information which $j$-invariants correspond to each other at different transmission points. This, of course, requires the chain of communication to be known, i.e., for the onion routing communication to already be partially compromised, but knowing the chain of communication now can lead to additional information that is not known in the classical the case and might compromise security. 
    \item If the index register needs to be contained in the transmission (e.g., for engineering reasons such as not being able to maintain entanglement if the index register is retained by E), then one should consider further encryption on the index- and $j$-registers. 
    \begin{enumerate}
        \item Adding a Diffie-Hellman encryption layer on the index- and $j$-registers prevents any external attacker from obtaining any information. However, an attacker measuring the state during transmission also destroys the state and E will only receive garbled non-sense. While this keeps the communication secure, this should be avoided as it opens up the possibility of an attacker preventing any communication reaching the receiver, unless the Diffie-Hellman layer is implemented in such a way that each classical index-$j$-combined register state is mapped to another classical index-$j$-combined register state. Furthermore, while this prevents any external attacker from executing the above-mentioned attack, each intermediary (B, C, or D) is still able to perform the attack; hence, must be trusted.
        \item E may add an encryption on the index register alone. Since nobody but E needs to access the index register, this does not impede other actors from executing their protocols. Again, an attacker measuring the index register will cause the communication to fail unless E's secret encryption maps classical states of the index register into classical states. In other words, the encryption must be chosen as a secret permutation of the numbers $0,\ldots,n-1$ with $\Omega\le n\le 2^{\lceil\log_2\Omega\rceil}$.
    \end{enumerate}
\end{enumerate}
    
\end{remark}

Upon receipt of this final transmission, all further operations are performed by E. First, E measures the state.
\begin{minted}[mathescape,
  linenos,
  numbersep=5pt,
  gobble=2,
  frame=lines,
  framesep=2mm,
  breaklines,
  firstnumber=124]{python}
    # E adds measurements
    qc.barrier()
    qc.measure(i_register,cl_i_register)
    qc.measure(anc_register,cl_anc_register)
    qc.measure(j_register,cl_j_register)
    
    # Transpile for simulator
    simulator = AerSimulator()
    qct = transpile(qc, simulator)
    
    # Run and get counts
    result = simulator.run(qct,shots=1).result()
    counts = result.get_counts(qct)
    key_raw = list(counts.keys())[0]
\end{minted}
This collapses the state into a single $|i\rangle\otimes|f_A(j_{i+\omega})\rangle$ which is uniformly selected from all $r$ possible values. As shown in Figure~\ref{fig:firstmeasurement}, repeating the experiment $10000$ times yields each state $|i\rangle\otimes|f_A(j_{i+\omega})\rangle$ roughly $\frac{10000}{\Omega}\approx2000$ times. 
\begin{figure}
    \centering
    \includegraphics[width=\linewidth]{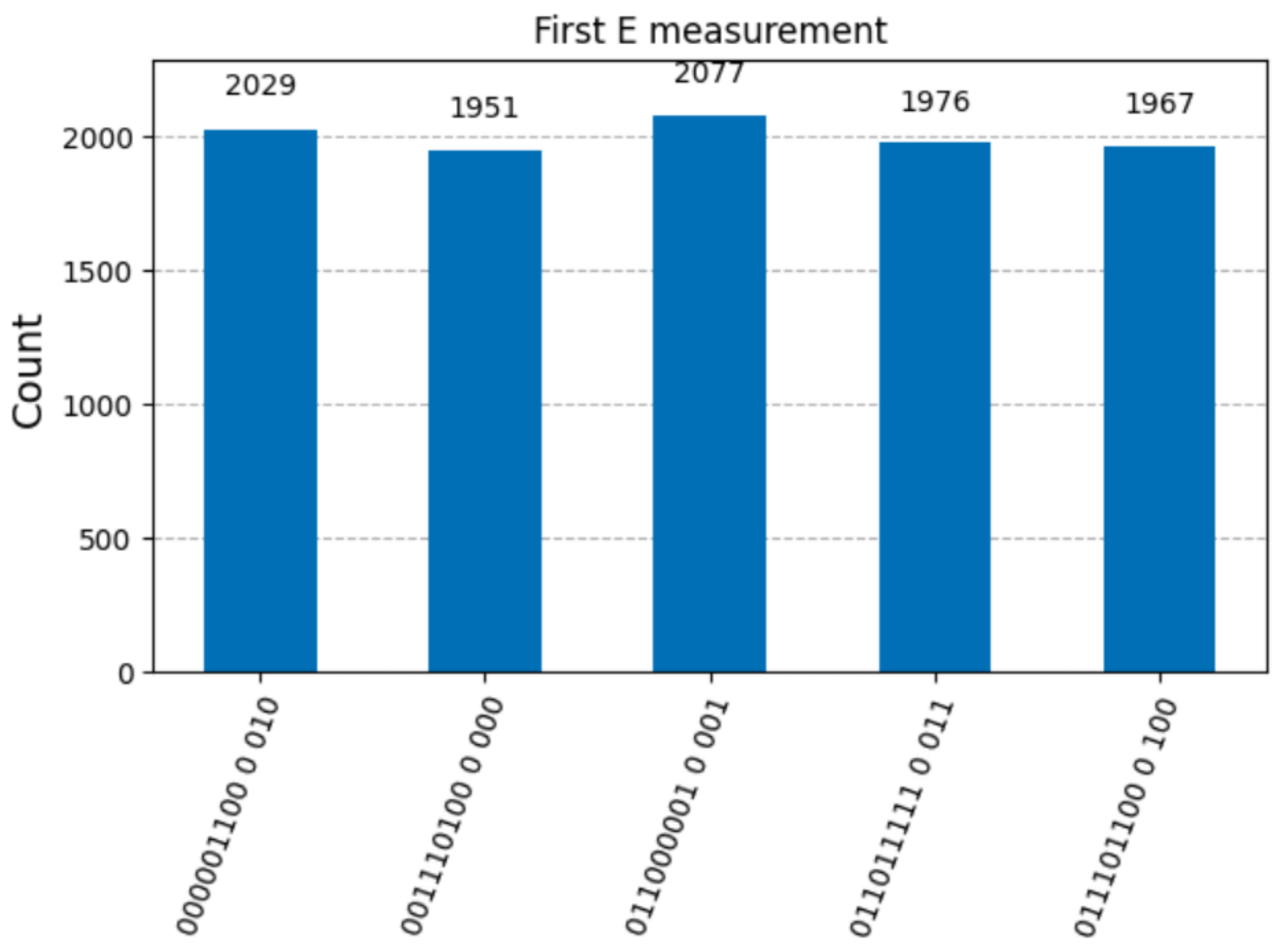}
    \caption{Example histogram of measurements as seen by E in their first measurement process if the communication were repeated 10000 times. The outcomes are drawn from a uniform distribution of all states $|i\rangle\otimes|f_A(j_{i+\omega})\rangle$. Here, the index register is the last set of three qubits, the $j$-register is the front set of nine qubits, and the single qubit in the middle is the ancilla required for the Grover step.}
    \label{fig:firstmeasurement}
\end{figure}

In reality, E will measure only once and obtain, for example, the measurement outcome \texttt{key\_raw = "011101100 0 100"}, which corresponds to $i=4$ and $f_A(j_{i+\omega})=236$. E now needs to uncompute their shift. They can do this by applying their inverse oracle \texttt{UEinv} $i+\omega$ times, or by applying the ``true'' inverse mapper oracle once and the inverse oracle \texttt{UEinv} $\omega$ times. Since we don't have the ``true'' mapper implemented, we will use \texttt{UEinv} $i+\omega$ times. E immediately measures the $j$-register afterwards again. 
\begin{minted}[mathescape,
  linenos,
  numbersep=5pt,
  gobble=2,
  frame=lines,
  framesep=2mm,
  breaklines,
  firstnumber=138]{python}
    # set up j-register as it is post measurement
    qc2 = QuantumCircuit(
        j_register,
        cl_j_register
    )
    
    key_split = key_raw.split(" ")
    for j in range(len(key_split[0])):
        if key_split[0][len(key_split[0])-j-1] == "1":
            qc2.x(j_register[j])
    
    qc2.barrier()

    # get index shift i 
    def bin2dec(s):
        n = 0
        for j in range(len(s)):
            n += int(s[len(s)-j-1])*2**j
        return n
    
    i = bin2dec(key_split[-1])
        
    # reverse shift i+omega
    for _ in range(i+omega):
        qc2.unitary(UEinv,j_register,label="UEinv")
        
    # and measure
    qc2.measure(j_register,cl_j_register)
\end{minted}
Letting E's cycle generator be denoted as $\mathfrak{e}$ and A's as $\mathfrak{a}$, this last operation executes
\begin{align*}
    \ket{\mathfrak{e}^{-i-\omega}*f_A(j_{i+\omega}}=\ket{\mathfrak{e}^{-i-\omega}*\mathfrak{a}*\mathfrak{e}^{i+\omega}*j_0}=\ket{\mathfrak{a}*j_0}=\ket{j_A}.
\end{align*}
Thus, the final measurement lets E read out $j_A$. 

In the example where \texttt{key\_raw = "011101100 0 100"}, E measures \texttt{"011010101"} in the $j$-register which is binary for $j_A=\mathfrak{a}*j_0=213$ as it should be since \texttt{aj[1]=213}. Hence, we have successfully communicated the secret key from A to E via B, C, and D.

\section{Security Considerations and Final Remarks}\label{sec:SecurityFinal}
For security reasons, we require the class number $h(\D)$ to be exponential in the security parameter. Since $h(\D)$ ``grows like'' $\sqrt{|\D|}$ \cite[pg.~19]{BhaMu}, $\D$, and hence the choice of $p$, should be set accordingly. Regarding the security of the class group action, there is ongoing work on appropriate parameter sets; the most up-to-date analysis is given by Campos et al.\ (2024) \cite{Campos}, which incorporates the best known quantum attacks based on Kuperberg’s hidden-shift algorithm running in subexponential time $2^{O(\sqrt{\log h(\D)})}$. As implementation techniques for evaluating the action continue to improve (e.g., \cite{PEGAS}), recommended parameters may be revised; a full treatment lies beyond the scope of this paper.

It is worth noting a distinction between a fully quantum setting such as ours and the classical group-action protocols (as used in post-quantum cryptography but analyzed against quantum adversaries). Classically, a public transcript may reveal a start point $j_0$ and an endpoint $j_k$, and the core hardness assumption is the vectorization problem: find $a\in G$ such that $a * j_0 = j_k$, for which the best known quantum algorithms are subexponential at best. In a quantum construction such as the QOR however, no intermediate $j$-invariants are accessible unless a measurement is performed (which yields a random $j$-invariant); only the public $j_0$ is visible. Moreover, by protocol design only the receiver performs the final measurement. 

Our second remark on security addresses the properties of \emph{uniform mixing} and \emph{periodicity} of the underlying isogeny graph. The property of uniform mixing is desirable since a sufficiently long walk on the isogeny graph becomes indistinguishable from a random walk. In particular, \emph{supersingular} $\ell$-isogeny graphs are \emph{Ramanujan}. Their nontrivial eigenvalues are bounded by $2\sqrt{\ell}$, so the classical walk mixes in $O(\log |V|)$ steps, $|V|$ being the number of vertices (see, e.g. \cite{CGL}). For \emph{ordinary} isogeny graphs Jao–Miller–Venkatesan (under GRH) prove in \cite[Cor.~1.3]{JaoExpander} that a walk of length
\[
t \;\ge\; C\,\frac{\log |V|}{\log\log q}
\]
is already near-uniform from any start. Here $C$ is a positive constant and, in the elliptic curve scenario of \cite[Theorem 1.5]{JaoExpander}, $|V| \sim \sqrt{q}$. 

In the quantum setting on the other hand, uniform mixing is known only for a narrow family of graphs, so one can instead work with the relaxed notion of $\varepsilon$-uniform mixing. It is known that for every $p \geq 5$, every $p$-cycle $C_p$ exhibits $\varepsilon$-uniform mixing \cite[Theorem 5.5.2]{Mull}. Consequently, for appropriately chosen walk lengths (or evolution times), the walkers' paths appear \(\varepsilon\)-random to any observer. Moreover, combining \cite[Theorem 2.2 \& Corollary 2.3]{Godsil2} and \cite[Theorem 4.3.2]{Mull}, one obtains the known result that no $n$-cycle $C_n$ is periodic unless $n \in \{1, 2, 3, 4, 6\}$. Finally, \cite[Lemma 3]{SaSeShpa} ensures that other than the complete graph $K_p$, no other circulant graph $G(p;S)$ on $p$-many vertices and generating set $S$ of cardinality $|S| < p-1$ is periodic. It therefore appears that $p$-cycles are good candidates for quantum cryptographic scenarios as the one considered in this paper. In the extremal case where the class group is cyclic of large prime order, the resulting isogeny graph, after $d$-many encryptions, is the disjoint union of $d$-many $p$-cyles $C_p$. Even when the class group is not of prime order, one can restrict to a large enough prime-order subgroup and the corresponding isogeny cycles to obtain the same effect.

Our final remark turns to implementation considerations. These primarily concern the quantum oracles. If we are in the CSIDH/CSURF setting discussed in Section~\ref{sec:ClAction}, i.e. we work with supersingular elliptic curves over $\mathbb{F}_p$, then efficient classical computation of the action $*$ by arbitrary class group elements is becoming increasingly feasible, owing to advances such as PEGASIS~\cite{PEGAS}. This remains an active research area, and we do not pursue further details here as this is not in the scope of the paper. Alternatively, as already discussed in the paper, the class group action can be interpreted as continuous-time quantum walk on the underlying isogeny graph, hence on the underlying Hamiltonian. Currently, in a parallel project, we are developing this viewpoint theoretically, and we are considering possible implementations on actual hardware.

\section*{Acknowledgments}
This work is supported by the European Union’s Horizon Europe Framework Programme (HORIZON) under the ERA Chair scheme “QUEST” with grant agreement no. 101087126. This work is also supported with funds from the Ministry of Science, Research and Culture of the State of Brandenburg within the Centre for Quantum Technologies and Applications (CQTA).

\begin{figure}[h]
   \includegraphics[width = 0.12\textwidth]{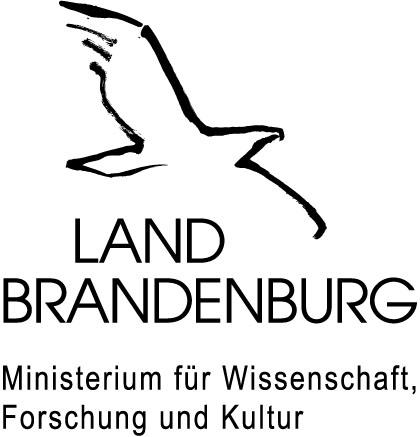}
\end{figure}

\end{document}